\newif\ifappendix
\gdef\csname putmaybeappendix#1\endcsname{}
\xdef\csname putmaybeappendix#1\endcsname{\expandafter\expandonce\csname putmaybeappendix#1\endcsname\unexpanded{\par}\expandonce\BODY}
\newcommand{\putmaybeappendix}[1]{\csname putmaybeappendix#1\endcsname}
\begin{document}
\title{Improved Lower Bounds for \\ Graph Embedding Problems}

\author{Hans L. Bodlaender\inst{1,2}\thanks{The research of this author was partially supported by the NETWORKS project,
funded by the Netherlands Organization for Scientific Research NWO.} \and Tom C. van der Zanden\inst{1}}

%If using runnningheads you can abbreviate the author name on even pages:
%\authorrunning{abbreviated author name}
%and you can change the author name in the table of contents
%\tocauthor{enhanced author name}

\institute{Department of Computer Science, Utrecht University, Utrecht, The Netherlands \email{\{H.L.Bodlaender,T.C.vanderZanden\}@uu.nl} \and Department of Mathematics and Computer Science, Eindhoven University of Technology, Eindhoven, The Netherlands}

\newlength{\problemoffset}
\setlength{\problemoffset}{0in}

% Decision problem macro - adapted from http://courses.cs.vt.edu/cs3824/Fall2011/homework2.tex‎
% A command for formatting decision problems a la Garey and Johnson
\newcommand{\decision}[3]{%     \decision{NAME}{INSTANCE}{QUESTION}
\begin{list}{}{
\setlength{\leftmargin}{\problemoffset}
\setlength{\rightmargin}{\problemoffset}
\setlength{\parsep}{0pt}
\setlength{\itemsep}{2pt}
\setlength{\topsep}{\itemsep}
\setlength{\partopsep}{\itemsep}
}
\item
{\textsc{#1}}
\item
{\textbf{Instance:} #2}
\item
{\textbf{Question:} #3}
\end{list}
}

\maketitle

\begin{abstract}
In this paper, we give new, tight subexponential lower bounds for a number of graph embedding problems. We introduce two related combinatorial problems, which we call {\sc String Crafting} and {\sc Orthogonal Vector Crafting}, and show that these cannot be solved in time $2^{o(|s|/\log{|s|})}$, unless the Exponential Time Hypothesis fails.

These results are used to obtain simplified hardness results for several graph embedding problems, on more restricted graph classes than previously known: assuming the Exponential Time Hypothesis, there do not exist algorithms that run in $2^{o(n/\log n)}$ time for 
{\sc Subgraph Isomorphism} on graphs of pathwidth 1,
{\sc Induced Subgraph Isomorphism} on graphs of pathwidth 1,
{\sc Graph Minor} on graphs of pathwidth 1,
{\sc Induced Graph Minor} on graphs of pathwidth 1,
{\sc Intervalizing 5-Colored Graphs} on trees,
and finding a tree or path decomposition with width at most $c$ with a minimum number of bags, for any fixed $c\geq 16$.

$2^{\Theta(n/\log n)}$ appears to be the ``correct'' running time for many packing and embedding problems on restricted graph classes, and we think {\sc String Crafting} and {\sc Orthogonal Vector Crafting} form a useful framework for establishing lower bounds of this form.
\end{abstract}

\section{Introduction}

Many $\NP$-complete graph problems admit faster algorithms when restricted to planar graphs. In almost all cases, these algorithms have running times that are exponential in a square root function (e.g. $2^{O(\sqrt{n})}$, $n^{O(\sqrt{k})}$ or $2^{O(\sqrt{k})} n^{O(1)}$) and most of these results are tight, assuming the Exponential Time Hypothesis. This seemingly universal behaviour has been dubbed the ``Square Root Phenomenon'' \cite{rootphenomenon}. The open question \cite{marx-open} of whether the Square Root Phenomenon holds for {\sc Subgraph Isomorphism} in planar graphs, has recently been answered in the negative: assuming the Exponential Time Hypothesis, there is no $2^{o(n/\log{n})}$-time algorithm for {\sc Subgraph Isomorphism}, even when restricted to (planar) graphs of pathwidth $2$ \cite{icalppaper}. The same lower bound holds for {\sc Induced Subgraph} and {\sc (Induced) Minor} and is in fact tight: the problems admit $2^{O(n/\log{n})}$-time algorithms on $H$-minor free graphs \cite{icalppaper}.

The lower bounds in \cite{icalppaper} follow by reductions from a problem called {\sc String 3-Groups}. We introduce a new problem, {\sc String Crafting}, and establish a $2^{\Omega(|s|/\log{|s|})}$-time lower bound under the ETH for this problem by giving a direct reduction from {\sc 3-Satisfiability}. Using this result, we show that the $2^{\Omega(|s|/\log{|s|})}$-time lower bounds for (Induced) Subgraph and (Induced) Minor hold even on graphs of pathwidth $1$.

Alongside {\sc String Crafting}, we introduce the related {\sc Orthogonal Vector Crafting} problem. Using this problem, we show $2^{\Omega(|n|/\log{|n|})}$-time lower bounds for deciding whether a $5$-coloured tree is the subgraph of an interval graph (for which the same colouring is proper) and for deciding whether a graph admits a tree (or path) decomposition of width $16$ with at most a given number of bags.

For any fixed $k$, {\sc Intervalizing $k$-Coloured Graphs} can be solved in time $2^{O(n/\log n)}$ \cite{intervalizing-exact}. Bodlaender and Nederlof \cite{mspd-springer} conjecture a lower bound (under the Exponential Time Hypothesis) of $2^{\Omega(n/\log n)}$ time for $k\geq 6$; we settle this conjecture and show that it in fact holds for $k\geq 5$, even when restricted to trees. To complement this result for a fixed number of colours, we also show that there is no algorithm solving {\sc Intervalizing Coloured Graphs} (with an arbitrary number of colours) in time $2^{o(n)}$, even when restricted to trees.

The minimum size path and tree decomposition problems can also be solved in $2^{O(n/\log n)}$ time on graphs of bounded treewidth. This is known to be tight under the Exponential Time Hypothesis for $k\geq 39$ \cite{mspd-springer}. We improve this to $k\geq 16$; our proof is also simpler than that in \cite{mspd-springer}.

Our results show that {\sc String Crafting} and {\sc Orthogonal Vector Crafting} are a useful framework for establishing lower bounds of the form $2^{\Omega(n/\log n)}$ under the Exponential Time Hypothesis. It appears that for many packing and embedding problems on restricted graph classes, this bound is tight.

\section{Preliminaries}

\paragraph*{Strings.} We work with the alphabet $\{0,1\}$; i.e., strings are elements of $\{0,1\}^\ast$.
The length of a string $s$ is denoted by $|s|$. The $i^\textrm{th}$ character of a string $s$ is denoted by $s(i)$.
Given a string $s\in \{0,1\}^\ast$, $\overline{s}$ denotes binary complement of $s$, that is, each occurrence of a $0$ is replaced by a $1$ and vice versa; i.e., $|s|=|\overline{s}|$, and for $1\leq i \leq |s|$, $\overline{s}(i) = 1 - s(i)$.
E.g., if $s= 100$, then $\overline{s} = 011$.
With $s^R$, we denote the string $s$ in reverse order; e.g., if $s=100$, then $s^R = 001$.
The concatenation of strings $s$ and $t$ is denoted by $s \cdot t$.
A string $s$ is a {\em palindrome}, when $s=s^R$. By $0^n$ (resp. $1^n$) we denote the string that consists of $n$ $0$'s (resp. $1$'s).

\paragraph*{Graphs.}
Given a graph $G$, we let $V(G)$ denote its vertex set and $E(G)$ its edge set. Let $Nb(v)$ denote the open neighbourhood of $v$, that is, the vertices adjacent to $v$, excluding $v$ itself.

\paragraph*{Treewidth and Pathwidth.} A \emph{tree decomposition} of a graph $G=(V,E)$ is a tree $T$ with vertices $t_1,\ldots,t_s$ with for each vertex $t_i$ a \emph{bag} $X_i\subseteq V$ such that for all $v\in V$, the set $\{t_i\in\{t_1,\ldots,t_s\} \mid v\in X_i\}$ is non-empty and induces a connected subtree of $T$ and for all $(u,v)\in E$ there exists a bag $X_i$ such that $\{u,v\}\in X_i$. The \emph{width} of a tree decomposition is $\max_i \{|X_i|-1\}$ and the \emph{treewidth} of a graph $G$ is the minimum width of a tree decomposition of $G$. A \emph{path decomposition} is a tree decomposition where $T$ is a path, and the pathwidth of a graph $G$ is the minimum width of a path decomposition of $G$.

A graph is a \emph{caterpillar tree} if it is connected and has pathwidth $1$.

\paragraph*{Subgraphs and Isomorphism.}
$H$ is a \emph{subgraph} of $G$ if $V(H)\subseteq V(G)$ and $E(H)\subseteq E(G)$; we say the subgraph is \emph{induced} if moreover $E(H) = E(G)\cap V(H)\times V(H)$. We say a graph $H$ is \emph{isomorphic} to a graph $G$ if there is a bijection $f:V(H)\to V(G)$ so that $(u,v)\in E(H) \iff (f(u),f(v))\in E(G)$. 

\paragraph*{Contractions, minors.}
We say a graph $G'$ is obtained from $G$ by \emph{contracting} edge $(u,v)$, if $G'$ is obtained from $G$ by replacing vertices $u,v$ with a new vertex $w$ which is made adjacent to all vertices in $Nb(u) \cup Nb(v)$. A graph $G'$ is a minor of $G$ if a graph isomorphic to $G'$ can be obtained from $G$ by contractions and deleting vertices and/or edges. $G'$ is an \emph{induced minor} if we can obtain it by contractions and deleting vertices (but not edges).

We say $G'$ is an \emph{$r$-shallow minor} if $G'$ can be obtained as a minor of $G$ and any subgraph of $G$ that is contracted to form some vertex of $G'$ has radius at most $r$ (that is, there is a central vertex within distance at most $r$ from any other vertex in the subgraph). 
Finally, $G'$ is a \emph{topological minor} if we can subdivide the edges of $G'$ to obtain a graph $G''$ that is isomorphic to a subgraph of $G$ (that is, we may repeatedly take an edge $(u,v)$ and replace it by a new vertex $w$ and edges $(u,w)$ and $(w,v)$).

For each of (induced) subgraph, induced (minor), topological minor and shallow minor, we define the corresponding decision problem, that is, to decide whether a pattern graph $P$ is isomorphic to an (induced) subgraph/(induced) minor/topological minor/shallow minor of a host graph $G$.

\section{String Crafting and Orthogonal Vector Crafting}

We now formally introduce the {\sc String Crafting} problem:

\begin{verse}
{\sc String Crafting}\\
{\bf Given:} String $s$, and $n$ strings $t_1, \ldots, t_n$, with $|s| = \sum_{i=1}^n |t_i|$. \\
{\bf Question:} Is there a permutation $\Pi: \{1, \ldots, n\} \rightarrow \{1, \ldots, n \}$, such that
the string $t^\Pi = t_{\Pi(1)} \cdot t_{\Pi(2)} \cdots t_{\Pi}$ fulfils that for each $i$, $1\leq i\leq |s|$,
$s(i) \geq t^{\Pi}(i)$.
\end{verse}

I.e., we permute the collection of strings $\{t_1,t_2,\ldots t_n\}$, then concatenate these, and should obtain a resulting string $t^\Pi$ (that necessarily has the same length as $s$) such that on every position where $t^\Pi$ has a 1, $s$ also has a 1.

Given $\Pi$, $1\leq i\leq |s|$, we let $idx_\Pi(i) = \max \{1\leq j\leq n : \Sigma_{k=1}^j |t_k| \geq i\}$ and let $pos_\Pi(i) = i - \Sigma_{k=1}^{idx_\Pi(i)-1} |t_k|$.

We also introduce the following variation of {\sc String Crafting}, where, instead of requiring that whenever $t^\Pi$ has a $1$, $s$ has a $1$ as well, we require that whenever $t^\Pi$ has a $1$, $s$ has a $0$ (i.e. the strings $t^\Pi$ and $s$, viewed as vectors, are orthogonal).

\begin{verse}
{\sc Orthogonal Vector Crafting}\\
{\bf Given:} String $s$, and $n$ strings $t_1, \ldots, t_n$, with $|s| = \sum_{i=1}^n |t_i|$. \\
{\bf Question:} Is there a permutation $\Pi: \{1, \ldots, n\} \rightarrow \{1, \ldots, n \}$, such that
the string $t^\Pi = t_{\Pi(1)} \cdot t_{\Pi(2)} \cdots t_{\Pi}$ fulfils that for each $i$, $1\leq i\leq |s|$,
$s(i) \cdot t^{\Pi}(i) = 0$, i.e., when viewed as vectors, $s$ is orthogonal to $t^\Pi$.
\end{verse}

\begin{theorem}
Suppose the Exponential Time Hypothesis holds. Then there is no algorithm that solves the {\sc String Crafting} problem in
$2^{o(|s|/\log{|s|})}$ time, even when all strings $t_i$ are palindromes and start and end with a $1$.
\end{theorem}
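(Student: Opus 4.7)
The plan is to reduce from {\sc 3-Satisfiability}: by the Sparsification Lemma, under ETH there is no $2^{o(N+M)}$ algorithm for 3-SAT instances with $N$ variables and $M$ clauses, so it suffices to build a {\sc String Crafting} instance with $|s|=\Theta((N+M)\log(N+M))$. In that regime $|s|/\log|s|=\Theta(N+M)$, so a $2^{o(|s|/\log|s|)}$ algorithm for {\sc String Crafting} would yield a $2^{o(N+M)}$ algorithm for 3-SAT, contradicting ETH.

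I would use the standard grouping trick: partition the $N$ variables into $G=\lceil N/k \rceil$ groups of $k=\lceil \log(N+M)\rceil$ variables, so each group admits $2^k=\mathrm{poly}(N+M)$ partial assignments, and a satisfying assignment corresponds to picking one partial assignment per group making every clause true. For each group $g$, I would design a selection gadget consisting of a small number of palindromes, each starting and ending with $1$, whose placement by $\Pi$ inside a dedicated region of $s$ encodes the chosen partial assignment. Unique group identifiers, embedded mirror-symmetrically in every palindrome and mirrored in $s$, confine each group's palindromes to that group's region, so selectors from different groups cannot interfere.

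The central obstacle is encoding clause satisfaction, which is a disjunction, inside {\sc String Crafting}'s conjunctive constraint $s(i)\geq t^{\Pi}(i)$. For each clause $c_j$ I would introduce a clause gadget: a dedicated region of $s$ together with a small family of palindromes such that a placement compatible with $s\geq t^{\Pi}$ exists if and only if at least one literal of $c_j$ is true under the partial assignments selected by the gadgets above. The natural implementation is to position the clause gadget so that, through its region in $s$, it interacts with the selection regions of the (at most three) groups whose variables occur in $c_j$, with the bits of $s$ patterned so the clause region can be ``filled'' exactly when at least one relevant selector exposes a satisfying bit; turning this high-level OR-to-AND idea into a concrete palindromic gadget that fits the tight length budget will be the most delicate step.

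Once both gadget types are in place, correctness reduces to verifying the bijection between satisfying assignments of $\phi$ and valid permutations $\Pi$, and counting sizes: $G$ selection gadgets plus $M$ clause gadgets, each of polylogarithmic size, give $|s|=\Theta((N+M)\log(N+M))$. The palindrome-with-$1$-endpoints condition is enforced by mirroring every informational bit about each string's centre and padding with a leading and trailing $1$, which at most doubles palindrome lengths and does not affect asymptotics.
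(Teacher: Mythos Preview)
Your proposal has a genuine gap: you explicitly flag the clause gadget---encoding an OR inside the pointwise $\geq$ constraint---as ``the most delicate step'' and leave it unbuilt. That is the entire technical content of the reduction; without it you have only an outline. The paper's solution to exactly this difficulty is worth knowing, and it is simpler than what you are reaching for. The key device is an identifier $id(i)=1\cdot nb(i)\cdot\overline{nb(i)}\cdot\overline{nb(i)}^R\cdot nb(i)^R\cdot 1$ that contains both the binary encoding of $i$ and its complement. Under the constraint $s\geq t^\Pi$, a block $id(j)$ placed against some block $id(k)$ in $s$ forces $nb(k)\geq nb(j)$ \emph{and} $\overline{nb(k)}\geq\overline{nb(j)}$, hence $j=k$: the complement trick upgrades $\geq$ to $=$ on identifiers. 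Once identifiers are rigid, the OR of a clause is handled not by any gadget trickery but by geometry: the string $s$ lists, for each variable $x_i$, the identifiers of clauses containing $x_i$ followed by those containing $\neg x_i$; a single ``variable selection'' palindrome $id(i)\cdot 0^{rf_i}\cdot id(i)$ slides to cover one half or the other; the clause string $id(j)$ must then land on some uncovered occurrence of $id(j)$, which exists iff some literal of $C_j$ is set true. Filler strings absorb the slack. No explicit OR gadget is built at all.

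Separately, the grouping trick you propose is unnecessary here and makes the budget arithmetic harder, not easier. Grouping $k=\log N$ variables is the right move when you must squeeze $N$ binary choices into $O(N/\log N)$ objects; but the target size is $|s|=\Theta(N\log N)$, so you can afford one $O(\log N)$-length gadget per variable and per clause directly. Indeed, with $G=N/\log N$ groups each admitting $\mathrm{poly}(N)$ partial assignments, a selection region in $s$ that distinguishes those assignments by position already costs $\mathrm{poly}(N)$ length per group, overshooting the budget; and encoding the choice combinatorially inside polylog-length palindromes runs into the same OR-to-AND obstacle you already identified for clauses. Drop the grouping and aim for the per-variable/per-clause construction with complemented palindromic identifiers; that is where the proof actually lives.
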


\begin{proof}
Suppose we have an instance of {\sc 3-Satisfiability} with $n$ variables and $m$ clauses. 
We number the variables $x_1$ to $x_n$ and for convenience, we number the clauses $C_{n+1}$ to $C_{n+m+1}$.

We assume by the sparsification lemma that $m=O(n)$ \cite{sparsification}.

Let $q = \lceil \log (n+m) \rceil$, and let $r =  4 q +2 $.
We first assign an $r$-bit number to each variable and clause; more precisely, we give a mapping $id : \{1, \ldots, n+m\} \rightarrow
\{0,1\}^{r}$. Let $nb(i)$ be the $q$-bit binary representation of $i$, such that $0 \leq nb(i) \leq 2^r-1$.
We set, for $1\leq i \leq n+m$:
\[ id(i) = 1 \cdot nb(i) \cdot \overline{nb(i)} \cdot \overline{nb(i)}^R \cdot{nb(i)}^R \cdot 1 \]
Note that each $id(i)$ is an $r$-bit string that is a palindrome, ending and starting with a 1.

We first build $s$. $s$ is the concatenation of $2n$ strings, each representing one of the literals.

Suppose the literal $x_i$ appears $c_i$ times in a clause, and the literal $\neg x_i$ appears $d_i$ times in a clause.
Set $f_i = c_i + d_i$. 
Assign the following strings to the pair of literals $x_i$ and $\neg x_i$:
\begin{itemize}

\item $a^{x_i}$ is the concatenation of the id's of all clauses in which $x_i$ appears, followed by $d_i$ copies of the string $1 \cdot 0^{r-2} \cdot 1$.
\item $a^{\neg x_i}$ is the concatenation of the id's of all clauses in which $\neg x_i$ appears, followed by $c_i$ copies of the string $1 \cdot 0^{r-2} \cdot 1$.
\item $b^i = id(i) \cdot a^{x_i} \cdot id(i) \cdot a^{\neg x_i} \cdot id(i)$.
\end{itemize}

Now, we set $s= b^1 \cdot b^2 \cdots b^{n-1} \cdot b^n$.

We now build the collection of strings $t_i$. We have three different types of strings:
%
%HB: changed this to get some space
%\paragraph{Variable selection}
%For each variable $x_i$ we have one string of length $(f_i + 2) r$ of the form
%\[ id(i) \cdot 0^{r \cdot f_i} \cdot id(i) \] 
%
%\paragraph{Clause verification}
%For each clause $C_i$, we have a string of the form $id(i)$.
%
%\paragraph{Filler strings}
%A filler string is of the form $1 \cdot 0^{r-2} \cdot 1$. We have $n+2m$ filler strings.
\begin{itemize}
\item {\em Variable selection:} For each variable $x_i$ we have one string of length $(f_i + 2) r$ of the form
$ id(i) \cdot 0^{r \cdot f_i} \cdot id(i) $.
\item {\em Clause verification:}  For each clause $C_i$, we have a string of the form $id(i)$.
\item {\em Filler strings:} A filler string is of the form $1 \cdot 0^{r-2} \cdot 1$. We have $n+2m$ filler strings.
\end{itemize}

Thus, the collection of strings $t_i$ consists of $n$ variable selection strings, $m$ clause verification strings, and $n+2m$ filler strings. Notice that each of these strings is a palindrome and ends and starts with a 1.

The idea behind the reduction is that $s$ consists of a list of variable identifiers followed by which clauses a true/false assignment to that variable would satisfy. The variable selection gadget can be placed in $s$ in two ways: either covering all the clauses satisfied by assigning true to the variable, or covering all the clauses satisfied by assigning false. The clause verification strings then fit into $s$ only if we have not covered all of the places where the clause can fit with variable selection strings (corresponding to that we have made some assignment that satisfies the clause).

Furthermore, note that since $\Sigma_{i=1}^n f_i = 3m$, the length of $s$ is $(3n+6m)r$, the combined length of the variable selection strings is $(2n+3m)r$, the combined length of the clause verification strings is $mr$, and the filler strings have combined length $(n+2m)r$.

In the following, we say a string $t_i$ is mapped to a substring $s'$ of $s$ if $s'$ is the substring of $s$ corresponding to the position (and length) of $t_i$ in $t^\Pi$.

\begin{lemma}
The instance of {\sc 3-Satisfiability} is satisfiable, if and only if the constructed instance of {\sc String Crafting} has a solution.
\end{lemma}

\begin{proof}
First, we show the reverse implication. Suppose we have a satisfying assignment to the {\sc 3-Satisfiability} instance. Consider the substring of $s$ formed by $b^i$, which is of the form $id(i) \cdot a^{x_i} \cdot id(i) \cdot a^{\neg x_i} \cdot id(i)$. If in the satisfying assignment $x_i$ is true, we choose the permutation $\Pi$ so that variable selection string $id(i) \cdot 0^{r \cdot f_i} \cdot id(i)$ corresponding to $x_i$ is mapped to the substring $id(i) \cdot a^{\neg x_i} \cdot id(i)$; if $x_i$ is false, we map the variable selection string onto the substring $id(i) \cdot a^{x_i} \cdot id(i)$. A filler string is mapped to the other instance of $id(i)$ in the substring.

Now, we show how the clause verification strings can be mapped. Suppose clause $C_j$ is satisfied by the literal $x_i$ (resp. $\neg x_i$). Since $x_i$ is true (resp. false), the substring $a^{x_i}$ (resp. $a^{\neg x_i}$) of $s$ is not yet used by a variable selection gadget and contains $id(j)$ as a substring, to which we can map the clause verification string corresponding to $C_j$.

Note that in $s$ now remain a number of strings of the form $1 \cdot 0^{r-2} \cdot 1$ and a number of strings corresponding to id's of clauses, together $2m$ such strings, which is exactly the number of filler strings we have left. These can thus be mapped to these strings, and we obtain a solution to the {\sc String Crafting} instance. It is easy to see that with this construction, $s$ has a $1$ whenever the string constructed from the permutation does.

Next, for the forward implication, consider a solution $\Pi$ to the {\sc String Crafting} instance. We require the following lemma:

\begin{lemma}
Suppose that $t_i=id(j)$. Then the substring $w$ of $s$ corresponding to the position of $t_i$ in $t^\Pi$ is $id(j)$.
\end{lemma}

\begin{proof}
Because the length of each string is a multiple of $r$, $w$ is either $id(k)$ for some $k$, or the string $1\cdot 0^{r-2} \cdot 1$. Clearly, $w$ can not be $1\cdot 0^{r-2} \cdot 1$ because the construction of $id(i)$ ensures that it has more than $2$ non-zero characters, so at some position $w$ would have a $1$ where $w'$ does not. Recall that $id(i) = 1 \cdot nb(i) \cdot \overline{nb(i)} \cdot \overline{nb(i)}^R \cdot{nb(i)}^R \cdot 1$. If $j\not = k$, then either at some position $nb(k)$ has a $0$ where $nb(j)$ has a $1$ (contradicting that $\Pi$ is a solution) or at some position $\overline{nb(k)}$ has a $0$ where $\overline{nb(j)}$ has a $1$ (again contradicting that $\Pi$ is a solution). Therefore $j=k$.\qed
\end{proof}

Clearly, for any $i$, there are only two possible places in $t^\Pi$ where the variable selection string $id(i) \cdot 0^{r \cdot f_i} \cdot id(i)$ can be mapped to: either in the place of $id(i) \cdot a^{x_i} \cdot id(i)$ in $s$ or in the place of $id(i) \cdot a^{\neg x_i} \cdot id(i)$, since these are the only (integer multiple of $r$) positions where $id(i)$ occurs in $s$. If the former place is used we set $x_i$ to false, otherwise we set $x_i$ to true.

Now, consider a clause $C_j$, and the place where the corresponding clause verification gadget $id(j) $is mapped to. Suppose it is mapped to some substring of $id(i) \cdot a^{x_i} \cdot id(i) \cdot a^{\neg x_i} \cdot id(i)$. If $id(j)$ is mapped to a substring of $a^{x_i}$ then (by construction of $a^{x_i}$) $x_i$ appears as a positive literal in $C_j$ and our chosen assignment satisfies $C_j$ (since we have set $x_i$ to true). Otherwise, if $id(j)$ is mapped to a substring of $a^{\neg x_i}$ $x_i$ appears negated in $C_j$ and our chosen assignment satisfies $C_j$ (since we have set $x_i$ to false).

We thus obtain a satisfying assignment for the {\sc 3-Satisfiability} instance.\qed
\end{proof}

Since in the constructed instance, $|s| = (3n+6m)r$ and $r=O(\log n), m = O(n)$, we have that $|s|=O(n \log{n})$. A $2^{o(|s|/\log{|s|})}$-time algorithm for {\sc String Crafting} would give a $2^{o(n\log{n} / \log{(n\log n)})} = 2^{o(n)}$-time algorithm for deciding {\sc 3-Satisfiability}, violating the ETH.\qed
\end{proof}

Note that we can also restrict all strings $t_i$ to start and end with a $0$ by a slight modification of the proof.

\begin{theorem}
Assuming the Exponential Time Hypothesis, {\sc Orthogonal Vector Crafting} can not be solved in $2^{o(|s|/\log{|s|})}$ time, even when all strings $t_i$ are palindromes and start and end with a $1$.
\end{theorem}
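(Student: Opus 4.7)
The plan is to give a trivial reduction from \textsc{String Crafting} (with palindromic $t_i$ starting and ending with $1$) to \textsc{Orthogonal Vector Crafting} that preserves the input size and the structural properties of the $t_i$'s. The previous theorem then immediately transfers the $2^{\Omega(|s|/\log |s|)}$ ETH lower bound.

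The key observation is syntactic: the condition $s(i) \geq t^\Pi(i)$ from \textsc{String Crafting} says exactly that whenever $t^\Pi(i)=1$ we must have $s(i)=1$, while the condition $s(i) \cdot t^\Pi(i) = 0$ from \textsc{Orthogonal Vector Crafting} says that whenever $t^\Pi(i)=1$ we must have $s(i)=0$. These two constraints are interchanged precisely by complementing the target string $s$. So given an instance $(s, t_1, \ldots, t_n)$ of \textsc{String Crafting} produced by the previous theorem, the plan is to construct the \textsc{Orthogonal Vector Crafting} instance $(\overline{s}, t_1, \ldots, t_n)$, leaving the collection of small strings untouched.

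From there, the equivalence is immediate: for any permutation $\Pi$, the resulting concatenation $t^\Pi$ is the same in both instances, and for each position $i$ we have $s(i) \geq t^\Pi(i)$ if and only if $\overline{s}(i) \cdot t^\Pi(i) = 0$. Hence $\Pi$ is a valid solution to the \textsc{String Crafting} instance on $s$ exactly when it is a valid solution to the \textsc{Orthogonal Vector Crafting} instance on $\overline{s}$. Since the $t_i$ are unchanged, they remain palindromes that start and end with a $1$, matching the required restriction in the statement.

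Finally, since $|\overline{s}| = |s|$ and the collection of small strings (and hence the total input size) is preserved up to a constant factor, any $2^{o(|s|/\log |s|)}$ algorithm for \textsc{Orthogonal Vector Crafting} would yield a $2^{o(|s|/\log |s|)}$ algorithm for \textsc{String Crafting}, contradicting the previous theorem and thus the ETH. The only thing to verify carefully is that the syntactic restriction on the $t_i$ really does carry over unchanged, which it does by construction; beyond that, there is no genuine obstacle, as the reduction is essentially a one-line observation.
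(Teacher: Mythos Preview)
Your proposal is correct and matches the paper's own proof exactly: the paper also simply takes the complement of $s$ and invokes the previous theorem. Your additional verification that the $t_i$ remain palindromes starting and ending with $1$ and that $|\overline{s}|=|s|$ is accurate and makes explicit what the paper leaves implicit.
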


\begin{proof}
This follows from the result for {\sc String Crafting}, by taking the complement of the string $s$.\qed
\end{proof}

Again, we can also restrict all strings $t_i$ to start and end with a $0$. 

As illustrated by the following theorem, these lower bounds are tight. The algorithm is a simpler example of the techniques used in \cite{icalppaper,intervalizing-exact,mspd-springer}. \ifappendix\else There, the authors use isomorphism tests on graphs, here, we use equality of strings.\fi

\ifappendix \begin{theorem}[Appendix \ref{app:sc-algo}.]\else \begin{theorem}\fi\label{thm:sc-algo}
There exists algorithms, solving {\sc String Crafting} and {\sc Orthogonal Vector Crafting} in $2^{O(|s|/\log{|s|})}$.
\end{theorem}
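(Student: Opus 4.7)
The plan is a left-to-right dynamic programming whose state is compressed by exploiting that the pool of distinct short strings in the input is necessarily small. I will fix the threshold $L = \lfloor \log|s| - 2\log\log|s|\rfloor$ and call $t_i$ \emph{long} if $|t_i| > L$ and \emph{short} otherwise. Because $\sum_i |t_i| = |s|$, the number of long strings is at most $|s|/L = O(|s|/\log|s|)$; on the other hand, the total number of distinct binary values of length at most $L$ is at most $2^{L+1} = O(|s|/\log^2|s|)$.

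The DP state will be a triple $(p, M, \mathbf{c})$, where $p$ is the number of positions of $s$ already covered by the partial concatenation, $M$ records which long strings have been used, and $\mathbf{c} = (c_\sigma)_\sigma$ records, for each distinct short value $\sigma$, how many copies have been placed (at most $n_\sigma$, the input multiplicity of $\sigma$). From $(p,M,\mathbf{c})$ one may place any unused long $t_i \notin M$, or one more copy of some short $\sigma$ with $c_\sigma < n_\sigma$, provided the chosen string $t$ fits at position $p+1$: for {\sc String Crafting} we check $s(p+j) \geq t(j)$ for all $1 \leq j \leq |t|$, and for {\sc Orthogonal Vector Crafting} we check $s(p+j)\cdot t(j) = 0$. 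The instance is a yes-instance iff the final state (every string used, $p = |s|$) is reachable from $(0, \emptyset, \mathbf{0})$.

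To count states, long-string subsets contribute $2^{O(|s|/\log|s|)}$ possibilities. For the short multiplicities, let $k$ be the number of distinct short values; then $k = O(|s|/\log^2|s|)$ and $\sum_\sigma n_\sigma \leq |s|$, so by concavity of $\log(x+1)$,
\[
\sum_\sigma \log(n_\sigma + 1) \;\leq\; k\log\!\left(\tfrac{|s|}{k}+1\right) \;=\; O\!\left(\tfrac{|s|}{\log^2|s|}\cdot \log\log|s|\right) \;=\; o\!\left(\tfrac{|s|}{\log|s|}\right),
\]
so the number of multiplicity vectors is $2^{o(|s|/\log|s|)}$. Multiplying by the $|s|+1$ choices of $p$ yields $2^{O(|s|/\log|s|)}$ states in total, each with $O(|s|)$ outgoing transitions that are checkable in time polynomial in $|s|$, for an overall running time of $2^{O(|s|/\log|s|)}$.

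The step I expect to require the most care is the calibration of $L$: a naive choice like $L = \tfrac{1}{2}\log|s|$ leaves too many distinct short values and blows up the multiplicity-vector factor, while an $L$ that is too large breaks the long-string bitmask bound. The slack factor of $\log^2|s|$ in $2^L$ is precisely what balances the two factors; once this is set, correctness is routine, and the same algorithm handles {\sc Orthogonal Vector Crafting} by swapping the local feasibility check.
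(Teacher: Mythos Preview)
Your proposal is correct and follows essentially the same approach as the paper: Held--Karp style dynamic programming over partial prefixes, with the state split into a bitmask over long strings and a multiplicity vector over short strings. One minor correction: your aside that $L = \tfrac{1}{2}\log|s|$ ``blows up the multiplicity-vector factor'' is mistaken --- the paper uses exactly that threshold, yielding at most $\sqrt{|s|}$ distinct short values and hence at most $|s|^{\sqrt{|s|}} = 2^{\sqrt{|s|}\log|s|} = 2^{o(|s|/\log|s|)}$ multiplicity vectors, which is perfectly fine; your more delicate threshold and concavity argument are not needed, though they are also valid.
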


\begin{maybeappendix}{app:sc-algo}\begin{proof}\ifappendix \emph{(Theorem \ref{thm:sc-algo}.)} \fi
The brute-force algorithm of trying all $n!$ permutations of the strings $t_1,\ldots,t_n$ would take $O(|s|!s)$ time in the worst case. This can be improved to $O(2^{|s|}s^2)$ by simple Held-Karp \cite{held-karp} dynamic programming: for each (multi-)subset $K\subseteq \{t_1,\ldots,t_n\}$ and $l=\Sigma_{t\in K} |t|$ we memoize whether the substring $s(1)\cdots s(l)$ of $s$ together with $K$ forms a positive instance of {\sc String Crafting} (resp. {\sc Orthogonal Vector Crafting}).

The number of such (multi-)subsets $K$ is $2^{|s|}$ in the worst case. However, in this case, each string $t\in K$ is of length $1$ and we can instead store the multiplicity of each string, making for only $O(|s|^2)$ cases (since each string is either $0$ or $1$).

More generally, call a string $t_i$ \emph{long} if $|t_i|\geq \log_2(|s|) / 2$ and \emph{short} otherwise. There are at most $2|s|/\log{|s|}$ long strings, and as such we can store explicitly what subset of the long strings is in $K$ (giving $2^{O(|s|/\log{|s|})}$ cases). Since there are at most $2^{\log{|s|}/2} = \sqrt{|s|}$ distinct short strings, storing the multiplicity of each one contributes at most $|s|^{\sqrt{|s|}}=2^{\sqrt{|s|}\log{|s|}}$ cases. \qed
\end{proof}
\end{maybeappendix}

\section{Lower Bounds for Graph Embedding Problems}

\begin{theorem}\label{thm:sgihard}
Suppose the Exponential Time Hypothesis holds. Then there is no algorithm solving {\sc Subgraph Isomorphism} in $2^{o(n/\log n)}$ time, even if $G$ is a caterpillar tree of maximum degree $3$ or $G$ is connected, planar, has pathwidth $2$ and has only one vertex of degree greater than $3$ and $P$ is a tree.
\end{theorem}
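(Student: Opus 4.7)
The plan is to reduce from {\sc String Crafting} (hard by the previous theorem, with all $t_i$ palindromes starting and ending with $1$), encoding strings as caterpillar gadgets. Given an instance $(s, t_1, \ldots, t_n)$, I would build $G$ as a caterpillar with spine $v_1, \ldots, v_{|s|}$ and an extra leaf attached to $v_i$ iff $s(i) = 1$; this gives $G$ maximum degree $3$ and pathwidth $1$. The pattern $P$ is taken to be the vertex-disjoint union of caterpillars $P_1, \ldots, P_n$, where $P_i$ has spine of length $|t_i|$ with a leaf attached at each $1$-position of $t_i$. The forward implication is direct: a permutation $\Pi$ solving {\sc String Crafting} lets one lay the $P_{\Pi(k)}$'s spines end-to-end along $G$'s spine and match each $P$-leaf to the $G$-leaf at the same position, which works because $s(i) \geq t^\Pi(i)$.

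The reverse direction carries most of the argument. Given a subgraph embedding $f : V(P) \to V(G)$, I would argue in four steps: (i) every $P_i$-spine vertex has degree $\geq 2$ in $P$ --- spine endpoints carry a leaf because $t_i$ starts and ends with $1$ --- so it cannot map to a degree-$1$ leaf of $G$ and must map to a spine vertex of $G$; (ii) consecutive $P_i$-spine vertices map to adjacent $G$-spine vertices, so each $P_i$-spine maps to a consecutive segment of $G$'s spine (reversal is harmless by palindromicity of $t_i$); (iii) since $\sum_i |t_i| = |s|$ and the $P_i$'s are vertex-disjoint, these segments tile $G$'s spine exactly; (iv) any $P_i$-leaf attached to the $j$th spine vertex $u_j^{(i)}$ must map to an unused neighbor of $f(u_j^{(i)})$, but both spine neighbors of $f(u_j^{(i)})$ in $G$ are occupied (by adjacent $P_i$-spine vertices internally, or by a neighboring $P_{i'}$-spine vertex at a segment boundary), so the leaf must map to a genuine $G$-leaf, forcing $s$ to be $1$ at that position. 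This recovers the {\sc String Crafting} constraint. The main subtlety is step (iv) at segment boundaries, which combines the exact tiling of (iii) with injectivity of $f$ to forbid a $P_i$-endpoint leaf from ``escaping'' into a neighboring segment's spine vertex.

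For the planar pathwidth-$2$ case with $P$ a tree, I would use the variant of {\sc String Crafting} with all $t_i$ starting and ending with $0$ (noted as a minor modification of the previous theorem), connect $P$'s components via a central hub $c$ adjacent to one endpoint of each $P_i$, and add to $G$ a hub $c'$ connected through short marker gadgets to the candidate start positions. Placing markers at $0$-positions of $s$ keeps every non-hub vertex of $G$ at degree $\leq 3$, and $G$ remains planar with pathwidth $2$ (via bags $\{c', v_i, v_{i+1}\}$ sliding along the spine, augmented with leaf/marker bags). Injectivity forces $c \mapsto c'$ and hence each $P_i$ to begin at a marker, after which the tiling argument from the caterpillar case goes through. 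Finally, $|V(G)| = O(|s|) = O(n \log n)$ for a 3-SAT instance of size $n$, so a $2^{o(|V(G)|/\log|V(G)|)}$ subgraph isomorphism algorithm would yield a $2^{o(n)}$ algorithm for 3-SAT, contradicting ETH.
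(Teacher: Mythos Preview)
Your caterpillar case is essentially the paper's argument: the same construction (spine with hairs at the $1$-positions of $s$, pattern the disjoint union of the analogous caterpillars for the $t_i$) and the same correctness proof, with your steps (i)--(iv) spelling out in more detail the spine-to-spine correspondence that the paper asserts in one sentence.

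The connected pathwidth-$2$ case diverges from the paper and has a real gap. The paper simply adds a hub $u$ to $G$ adjacent to \emph{every} spine vertex $v_i$, and a hub $u'$ to $P$ adjacent to one spine vertex of each component; high degree forces $u' \mapsto u$, and the caterpillar argument carries over unchanged. Your version instead attaches $c'$ (via markers) only to the $0$-positions of $s$, so as to keep non-hub vertices at degree at most $3$. The problem is the forward direction. A {\sc String Crafting} solution $\Pi$ fixes the spine position at which each $P_{\Pi(k)}$ begins, namely $1 + \sum_{j<k} |t_{\Pi(j)}|$, and nothing forces these positions to be $0$-positions of $s$: even with every $t_i$ starting in $0$, the constraint $s \geq t^\Pi$ allows $s$ to be $1$ wherever $t^\Pi$ is $0$---in particular at every start position. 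In the extreme, take $s = 1^{|s|}$: {\sc String Crafting} is trivially a yes-instance, yet your $G$ has no markers at all and $c$ cannot map to $c'$. So a yes-instance of {\sc String Crafting} need not yield a subgraph embedding, and the reduction breaks. The paper's universal-hub fix avoids this entirely, at the price of letting spine vertices at $1$-positions reach degree $4$; the paper's ``only one vertex of degree greater than $3$'' is thus itself slightly loose, but the underlying reduction is sound.
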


\begin{proof}
By reduction from {\sc String Crafting}. We first give the proof for the case that G is a caterpillar tree of maximum degree 3, We construct $G$ from $s$ as follows: we take a path of vertices $v_1,\ldots,v_{|s|}$ (\emph{path vertices}). If $s(i)=1$, we add a \emph{hair vertex} $h_i$ and edge $(v_i,h_i)$ to $G$ (obtaining a caterpillar tree).
We construct $P$ from the strings $t_i$ by, for each string $t_i$ repeating this construction, and taking the disjoint union of the caterpillars created in this way (resulting in a graph that is a forest of caterpillar trees, i.e., a graph of pathwidth $1$). An example of this construction is depicted in Figure \ref{fig:subgraph-red}.
{\ifappendix The constructed instance of $G$ contains $P$ as a subgraph, if and only if the instance of {\sc String Crafting} has a solution: the order in which the caterpillars are embedded in $G$ gives the permutation of the strings: when a caterpillar in $P$ has a hair, $G$ must have a hair at the specific position, which
implies that a position with a 1 in the constructed string $t$ must be a position where $s$ also has a 1. See Appendix~\ref{app:subgraphs-lab} for details. \fi}
\begin{figure}[h]
    \centering
    \includegraphics[width=0.4\textwidth]{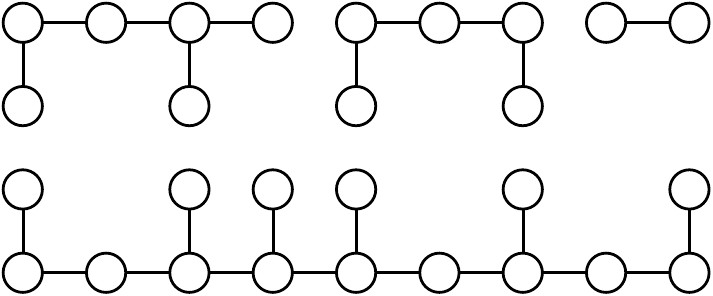}
    \caption{Simplified example of the graphs created in the hardness reduction for Theorem \ref{thm:sgihard}. The bottom caterpillar represents the host graph (corresponding to string $s$), the top caterpillars represent the strings $t_i$ and form the guest graph.
% Depicted is an instance where 
Here $s=101110101$ and $t_1=1010, t_2 = 101$ and $t_3=00$.
% (Note that these strings do not satisfy the requirements of the theorem, and are purely for illustration of the construction).
% HB: deze zin terug in de volle versie
}

    \label{fig:subgraph-red}
\end{figure}

\begin{maybeappendix}{app:subgraphs}
\begin{lemma}
The constructed instance of $G$ {\ifappendix in the proof of Theorem~\ref{thm:sgihard} \fi} contains $P$ as a subgraph only if the instance of {\sc String Crafting} has a solution.
\end{lemma}
\begin{proof}
Suppose $P$ contains $G$ as a subgraph. Since $\Sigma_i |t_i| = |s|$ and each string $t_i$ starts and ends with a $1$, the path vertices of $P$ and $G$ must be in one-to-one correspondence (we can not map a hair vertex of $P$ to a path vertex of $G$ since otherwise we would not be able to fit all connected components of $P$ into $G$). The order in which the connected components of $P$ appear as we traverse the path of $G$ gives a permutation $\Pi$ of the strings $t_i$. We claim that this permutation is a solution to the {\sc String Crafting} instance, since $G$ must have a hair vertex whenever $P$ has one (or else we would not have found $P$ as a subgraph) we have that $s$ has a $1$ whenever $t^\Pi$ has a $1$. Note that it does not matter that we can flip a component of $P$ (embed the vertices of the path in the reverse order) since the strings $t_i$ are palindromes.\qed
\end{proof}

\begin{lemma}
The constructed instance of $G$ {\ifappendix in the proof of Theorem~\ref{thm:sgihard} \fi} contains $P$ as a subgraph if the instance of {\sc String Crafting} has a solution.
\end{lemma}
\begin{proof}
Let $\Pi$ be a solution for the {\sc String Crafting} instance. We can map the path vertices of the connected components of $P$ to the path vertices of $G$ in the order the corresponding strings appear in the permutation $\Pi$ (e.g. the first path vertex of the connected component corresponding to $\Pi(1)$ gets mapped to the first path vertex $v_1$, the first path vertex of the component corresponding to $\Pi(2)$ gets mapped to the $|t_{\Pi(1)}|+1^\textrm{th}$ path vertex,...). Whenever a path vertex in $P$ is connected to a hair vertex, $t^\Pi$ has a $1$ in the corresponding position, and therefore $s$ has a $1$ in the corresponding position as well and thus $G$ also has a hair vertex in the corresponding position. We can thus appropriately map the hair vertices of $P$ to the hair vertices of $G$, and see that $P$ is indeed a subgraph of $G$. \qed
\end{proof}
\end{maybeappendix}

Since the constructed instance has $O(|s|)$ vertices, this establishes the first part of the lemma. For the case that $G$ is connected, we add to the graph $G$ constructed in the first part of the proof a vertex $u$ and, for each path vertex $v_i$, an edge $(v_i,u)$. To $P$ we add a vertex $u'$ that has an edge to some path vertex of each component. By virtue of their high degrees, $u$ must be mapped to $u'$ and the remainder of the reduction proceeds in the same way as in the first part of the proof.
\qed
\end{proof}

We now show how to adapt this hardness proof to the case of {\sc Induced Subgraph}:

\begin{theorem}
Suppose the Exponential Time Hypothesis holds. Then there is no algorithm solving {\sc Induced Subgraph} in $2^{o(n/\log n)}$ time, even if $G$ is a caterpillar tree of maximum degree $3$ or $G$ is connected, planar, has pathwidth $2$ and has only one vertex of degree greater than $3$ and $P$ is a tree.
\end{theorem}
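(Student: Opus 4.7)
The plan is to adapt the reduction of Theorem~\ref{thm:sgihard}, introducing a single modification that enforces the additional non-edge constraint required by \textsc{Induced Subgraph}. The main obstacle is that in the original construction, once two components of $P$ are embedded on consecutive backbone vertices of the caterpillar $G$, the path edge of $G$ between them would survive in the induced subgraph on the image even though $P$ contains no such edge (the two endpoints lie in different components).

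To overcome this, I will exploit the structure of the \textsc{String Crafting} instance produced in the proof of Theorem~1: every $t_i$ has length a multiple of $r = O(\log n)$, and $s$ partitions into $r$-length \emph{slots}, each of which starts and ends with a~$1$. I will modify $G$ by inserting one hair-free \emph{separator} vertex between every two consecutive slots, so that the backbone of $G$ becomes an alternation of period $r+1$ between $r$-long content blocks (carrying the hairs given by $s$) and single separators. Each component of $P$ for a string $t_i$ of length $k_i r$ is built analogously, as $k_i$ content blocks (with hairs from $t_i$) separated by $k_i-1$ internal separators absorbed into the component. The $(r+1)$-periodicity of this pattern, combined with the boundary-$1$ property (every content vertex adjacent to a separator carries a hair), forces any valid embedding of a component to map content blocks to content blocks of $G$ and internal separators to separator vertices of $G$. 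Consequently, between any two consecutive components in the embedding exactly one separator vertex of $G$ is left unmapped, and the last content vertex of the earlier component and the first content vertex of the later component are therefore not directly adjacent in $G$, eliminating the offending inter-component path edges from the induced subgraph on the image.

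The two directions of the reduction then mirror Theorem~\ref{thm:sgihard}: a \textsc{String Crafting} solution $\Pi$ yields an induced embedding by placing the components of $P$ into $G$'s slots in the order $\Pi$, and conversely any induced embedding determines such a permutation. For the connected, planar, pathwidth-$2$ case with $P$ a tree, I will add a single universal-like vertex to $G$, with its adjacency chosen so that no other vertex exceeds degree $3$, and a corresponding vertex to $P$ adjacent to one content vertex of each component, gluing the components into a tree; the high degrees force these two vertices to be matched, while the separators continue to absorb the inter-component non-edges. The constructed graphs have $O(|s|)$ vertices, so a $2^{o(n/\log n)}$-time algorithm for \textsc{Induced Subgraph} would contradict ETH via Theorem~1. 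The hard part, I expect, will be verifying the alignment rigidity---showing that no shifted placement of a component is compatible with both the path and the hair constraints of the induced condition---which should follow from the interaction of the period-$(r+1)$ separator pattern with the boundary-$1$ property of each slot.
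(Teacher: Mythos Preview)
Your approach is correct in principle but takes a considerably more elaborate route than the paper. The paper simply invokes the observation of Matou\v{s}ek and Thomas that subdividing every edge once turns a subgraph problem into an induced subgraph problem, and applies it to the construction of Theorem~\ref{thm:sgihard} by subdividing every path edge (hair edges need not be subdivided). This keeps $G$ a caterpillar of maximum degree~$3$ and requires essentially no new analysis: one gets the induced version for free from the subgraph version.

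Your separator-at-slot-boundaries idea also works, but your stated rigidity argument is not the actual mechanism. The ``boundary-$1$ property'' of $G$'s slots does \emph{not} by itself rule out shifted embeddings: the middle slots of a variable-selection component are all-$0$, so a shifted copy of such a component need not place any hair over a $G$-separator. What actually forces alignment is a counting argument you did not spell out: every backbone vertex of $P$ has degree~$\geq 2$ and hence must map to a backbone vertex of $G$, so each component occupies a contiguous segment of $G$'s spine; the total slack in spine length is exactly $N-1$ (where $N$ is the number of components), while the induced condition forces each of the $N-1$ inter-component gaps to be $\geq 1$ and the initial/final gaps to be $\geq 0$. Hence all internal gaps equal~$1$, the first component starts at spine position~$1$, and since each component has spine length $\equiv -1 \pmod{r+1}$, every component starts at a position $\equiv 1 \pmod{r+1}$, i.e.\ at a slot boundary. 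With alignment established, both directions of the reduction go through as you indicate.

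So your plan succeeds, but the paper's edge-subdivision trick is a one-line argument that avoids all of this slot bookkeeping; it is worth knowing as the standard tool for passing from subgraph to induced subgraph.
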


\begin{proof}
Matou{\v{s}}ek and Thomas \cite{isotreewidth} observe that by subdividing each edge once, a subgraph problem becomes an induced subgraph problem. Due to the nature of our construction, we do not need to subdivide the hair edges. We can adapt the proof of Theorem \ref{thm:sgihard} by subdividing every path edge (but not the hair edges) and for the connected case, also subdividing the edges that connect to the central vertices $u$ and $u'$.\qed
\end{proof}

We now show how to adapt this proof to {\sc (Induced) Minor, Shallow Minor} and {\sc Topological Minor}:

\begin{theorem}
Suppose the Exponential Time Hypothesis holds. Then there is no algorithm solving {\sc (Induced) Minor, Shallor Minor} or {\sc Topological Minor} in $2^{o(n/\log n)}$ time, even if $G$ is a caterpillar tree of maximum degree $3$ or $G$ is connected, planar, has pathwidth $2$ and has only one vertex of degree greater than $3$ and $P$ is a tree.
\end{theorem}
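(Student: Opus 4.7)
The plan is to reuse the construction of Theorem~\ref{thm:sgihard} (in its subdivided variant from the {\sc Induced Subgraph} proof when needed), and to argue that for each of the four embedding relations the situation collapses to the already-settled subgraph or induced-subgraph case. In both the caterpillar-tree setting and the connected planar setting, $G$ and $P$ have maximum degree at most~$3$, apart from the single apex vertex $u/u'$, which in the connected case must be mapped to itself by a degree count since it is the unique high-degree vertex on either side.

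Because in graphs of maximum degree~$3$ the minor and topological-minor relations coincide, it suffices to treat topological minor and induced minor. For topological minor, the vertex-count identity $\sum_i |t_i|=|s|$ makes the matching between path vertices of $G$ and $P$ very tight; combined with the fact that each $t_i$ starts and ends with a~$1$ (so each caterpillar component of $P$ has degree-$2$ endpoints with an attached hair), a straightforward length/degree count forces any topological embedding to send each branch vertex of $P$ to an actual path or hair vertex of $G$, with edges of $P$ realized by the corresponding unique paths of $G$. This recovers a subgraph embedding, and Theorem~\ref{thm:sgihard} applies. The induced-minor case follows from the same argument once path edges are subdivided as in the {\sc Induced Subgraph} proof: contractions of path edges are ruled out by the induced condition, and deletions of the surplus hair vertices of $G$ absorb all the slack between $|V(G)|$ and $|V(P)|$. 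For shallow minor, the $0$-shallow case already coincides with the subgraph relation and inherits the lower bound directly; for any larger radius the minor argument above applies without change.

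The main obstacle will be making the ``no contraction helps'' step fully rigorous, because $|V(P)|$ is in general strictly smaller than $|V(G)|$ (whenever $s$ has more~$1$s than $t^\Pi$ does). The key observation that has to be turned into a proof is that the surplus vertices of $G$ are precisely hair vertices, which can only be deleted or kept, never contracted into a branch vertex without producing a branching of degree~$\geq 4$ or violating the induced-structure/max-degree constraints imposed by $P$. Once this is established, the extra flexibility afforded by minor, topological-minor, and shallow-minor operations never creates an embedding that does not already correspond to a subgraph embedding in the sense of Theorem~\ref{thm:sgihard}, and the same $2^{\Omega(n/\log n)}$ lower bound carries over.
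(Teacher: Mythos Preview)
Your argument is essentially correct, but it is considerably more elaborate than what the paper does, and the ``main obstacle'' you flag is in fact a non-issue once you adopt the paper's viewpoint. The paper reuses the (induced) subgraph reduction verbatim and disposes of all four minor variants in two sentences: contracting a hair edge of $G$ is \emph{literally the same} as deleting the hair vertex (a degree-$1$ vertex contracted into its unique neighbour disappears without changing any other adjacency), while contracting any path edge of $G$ shortens the spine below $|s|=\sum_i|t_i|$, so the caterpillar components of $P$ no longer fit end-to-end. Dually, any subdivision of $P$ can only lengthen a component (or, for a hair at an interior spine vertex, destroy the caterpillar property), again breaking the exact length match. Thus contractions and subdivisions never produce an embedding that was not already a subgraph embedding, and the lower bound transfers immediately.

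Your route via the maximum-degree-$3$ equivalence of minor and topological minor, branch-vertex counting, and a separate induced-minor analysis gets to the same conclusion, but the detour is unnecessary: you do not need to invoke that equivalence, nor worry about ``branching of degree $\geq 4$'', once you notice that the only vertices of $G$ not forced into the spine bijection are degree-$1$ hair leaves whose contraction coincides with deletion. One small correction: for $r$-shallow minor with $r\geq 1$ you should not say ``the minor argument applies without change'' as though it were a separate case---the point is that the spine-length argument already rules out \emph{every} non-trivial contraction, so all minor variants collapse to the subgraph case simultaneously.
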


\begin{proof}
We can use the same reduction as for (induced) subgraph. Clearly, if $P$ is an (induced) subgraph of $G$ then it is also an (induced/shallow/topological) minor. Conversely, if $P$ is not a subgraph of $G$ then allowing contractions in $G$ or subdivisions in $P$ do not help: contracting a hair edge simply removes that edge and vertex from the graph, while contracting a path edge immediately makes the path too short to fit all the components.\qed
\end{proof}
\FloatBarrier
\section{Tree and Path Decompositions with Few Bags}

In this section, we study the minimum size tree and path decomposition problems:

\begin{verse}
{\sc Minimum Size Tree Decomposition of width $k$ ($k$-MSTD)}\\
{\bf Given:} A graph $G$, integers $k,n$. \\
{\bf Question:} Does $G$ have a tree decomposition of width at most $k$, that has at most $n$ bags?
\end{verse}

The Minimum Size Path Decomposition ($k$-MSPD) problem is defined analogously. The following theorem is an improvement over Theorem 3 of \cite{mspd-springer}, where the same was shown for $k\geq 39$; our proof is also simpler.

\begin{theorem}\label{thm:mspd}
Let $k \geq 16$. Suppose the Exponential Time Hypothesis holds, then there is no algorithm for $k$-MSPD or $k$-MSTD using $2^{o(n/\log n)}$ time.
\end{theorem}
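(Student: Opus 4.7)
The plan is to reduce from \textsc{Orthogonal Vector Crafting} on the instances with palindromic $t_i$ starting and ending with $1$, for which the previous theorem already rules out $2^{o(|s|/\log |s|)}$-time algorithms under ETH. Given $(s,t_1,\dots,t_n)$, I would build a graph $G$ on $O(|s|)$ vertices together with a bag budget $N=|s|$ so that $G$ admits a path decomposition of width $16$ using at most $N$ bags iff the \textsc{OVC} instance has a satisfying permutation. Since $|s|=O(n\log n)$ in those reduced \textsc{OVC} instances, this yields the $2^{o(n/\log n)}$ lower bound for $k$-MSPD; because the construction is essentially a caterpillar of cliques, any tree decomposition achieving $|s|$ bags can be straightened into a path decomposition, giving the same bound for $k$-MSTD.

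The core of the construction is a rigid ``backbone'' of $|s|$ slot blocks $B_1,\dots,B_{|s|}$, each a clique on $16$ vertices with $B_i$ and $B_{i+1}$ sharing all but one vertex; this alone forces every path decomposition of width $16$ to use at least $|s|$ bags, with bag $i$ containing all of $B_i$ and thus $16$ vertices. At each position $i$ with $s(i)=1$ I attach one private ``$s$-obligation'' vertex joined to $B_i$, pushing bag $i$ to size $17$ (still width $16$) and consuming the single remaining unit of slack at that slot.

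For each string $t_j$ I would then construct a connected \emph{$t_j$-gadget}: a path of $|t_j|$ \emph{markers} augmented with auxiliary structure (small private cliques or fans attached to each marker) whose role is to force every marker into its own bag, consecutively along the backbone, and to add one extra vertex to that bag exactly when $t_j(p)=1$ at local position $p$. Palindromicity of $t_j$ makes the two possible orientations of the embedding equivalent. A width-$16$ decomposition with $|s|$ bags then exists iff for every slot $i$ the $16$ backbone vertices plus at most one $s$-obligation plus at most one marker charge total at most $17$; equivalently, iff $s(i)\cdot t^\Pi(i)=0$ for every $i$, which is precisely the \textsc{OVC} condition. Conversely, any violation of orthogonality forces some bag to size $\geq 18$, and the only way to stay within width $16$ is to insert an extra bag, pushing the total above $N$.

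The main obstacle will be establishing rigidity of the $t_j$-gadgets: that any width-$16$ decomposition with only $|s|$ bags must embed each gadget into exactly $|t_j|$ consecutive backbone bags in the natural order (up to the palindromic flip), with no marker ``hiding'' in an adjacent bag. This is where the constant $k=16$ really gets tuned: the auxiliary vertices attached to each marker must have enough private neighbours to saturate all residual width capacity whenever the marker sits in the wrong bag, yet leave the desired single unit of slack when it sits in the right one. The bookkeeping here is where the present proof should become cleaner than the $k\geq 39$ construction of~\cite{mspd-springer}, because the $s$-obligation and the marker charge compete for a single shared unit of slack rather than requiring independent large gadgets.
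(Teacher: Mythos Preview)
Your high-level plan—reduce from \textsc{Orthogonal Vector Crafting}, build a rigid chain forcing $|s|$ bags, and let the $s$-bit and $t$-bit compete for the last unit of slack in each bag—is exactly the paper's, and your MSTD remark matches theirs. But the concrete construction has two arithmetic gaps. First, your backbone does not force $|s|$ bags at width $16$: with $|B_i\cap B_{i+1}|=15$ one has $|B_i\cup B_{i+1}|=17$, so a single width-$16$ bag can swallow two consecutive blocks, and the backbone alone already admits a path decomposition with about $|s|/2$ bags. Second, even granting rigidity, the backbone leaves only one free slot per bag, which is not enough for the $t_j$-gadgets to carry any structure enforcing consecutiveness: an edge between successive markers forces two markers into a common bag, and your ``small private cliques or fans'' must themselves live in bags. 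The count ``$16$ backbone $+$ at most one $s$-obligation $+$ at most one marker charge $\le 17$'' silently drops the marker vertex itself and all auxiliary vertices; once those are included, width $16$ is exceeded even at orthogonal positions, so the forward direction of the reduction already fails.

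The paper fixes both issues by splitting $16$ as $12+4$ rather than $16+0$. The $s$-component is a chain of $|s|+1$ \emph{disjoint} $6$-cliques $C_i$ with $C_i$ fully joined to $C_{i+1}$; each $C_i\cup C_{i+1}$ is then a $12$-clique and three consecutive $C$'s need $18>17$ vertices, which genuinely forces one such pair per bag and hence exactly $|s|$ bags. Each $t_j$-component is the identical construction with $2$-cliques, self-rigid by the same counting within the $5$ slots that remain after the $C$'s. Bag $i$ holds $C_i\cup C_{i+1}$ together with two consecutive $T$-cliques ($12+4=16$), and the last slot is contested by an optional vertex adjacent to $C_i\cup C_{i+1}$ (present when $s(i)=1$) or to the two $T$-cliques (present when the corresponding $t$-bit is $1$). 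No separate rigidity gadget is needed; the chain-of-disjoint-cliques shape does that work on both the $s$ side and the $t$ side simultaneously, which is precisely what lets the constant drop to $16$.
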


\begin{proof}
By reduction from {\sc Orthogonal Vector Crafting}. We begin by showing the case for MSPD, but note the same reduction is used for MSTD.

For the string $s$, we create a connected component in the graph $G$ as follows: for $1\leq i \leq |s|+1$ we create a clique $C_i$ of size $6$, and (for $1\leq i \leq|s|$) make all vertices of $C_i$ adjacent to all vertices of $C_{i+1}$. For $1 \leq i \leq |s|$, if $s(i)=1$, we create a vertex $s_i$ and make it adjacent to the vertices of $C_i$ and $C_{i+1}$.

For each string $t_i$, we create a component in the same way as for $s$, but rather than using cliques of size $6$, we use cliques of size $2$: for each $1\leq i\leq n$ and $1\leq j\leq |t_i|+1$ create a clique $T_{i,j}$ of size $2$ and (for $1\leq j\leq |t_i|$) make all vertices of $T_{i,j}$ adjacent to all vertices of $T_{i,j+1}$. For $1\leq j\leq |t_i|$, if $t_i(j)=1$, create a vertex $t_{i,j}$ and make it adjacent to the vertices of $T_{i,j}$ and $T_{i,j+1}$.

An example of the construction (for $s=10110$ and $t_1=01001$) is shown in Figure \ref{fig:mstd-red}. We now ask whether a path decomposition of width $16$ exists with at most $|s|$ bags.

\begin{figure}[h]
    \centering
    \includegraphics[width=0.7\textwidth]{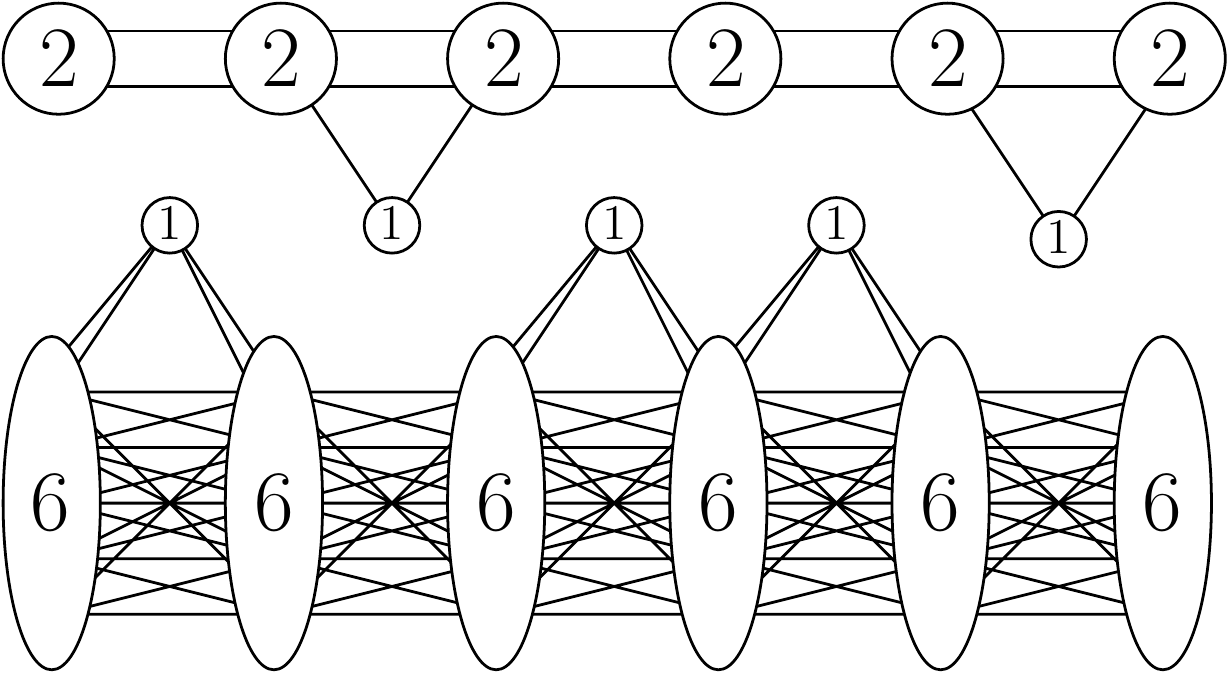}
    \caption{Simplified example of the graph created in the hardness reduction for Theorem \ref{thm:mspd}. The circles and ellipses represent cliques of various sizes. The component depicted in the top of the picture corresponds to $t_1=01001$, while the component at the bottom corresponds to $s=10110$.}
    \label{fig:mstd-red}
\end{figure}
\ifappendix For the correctness proof of the reduction, we refer to Appendix \ref{app:mspd-lab}.\fi

\begin{maybeappendix}{app:mspd}
\begin{lemma}
If there exists a solution $\Pi$ to the {\sc Orthogonal Vector Crafting} instance, then $G$ has a path decomposition of width $16$ with at most $|s|$ bags.
\end{lemma}

\begin{proof}
Given a solution $\Pi$, we show how to construct such a decomposition with bags $X_i,1\leq i\leq |s|$. In bag $X_i$ we take the vertices $C_i,C_{i+1}$ and (if it exists) the vertex $s_i$. We also take the cliques $T_{idx_\Pi(i), pos_\Pi(i)}$, $T_{idx_\Pi(i+1), pos_\Pi(i+1)}$ and the vertex $t_{idx_\Pi(i),pos_\Pi(i)}$ (if it exists).

Each bag contains two cliques of size $6$ and two cliques of size $2$, adding up to $16$ vertices in each bag. Each bag may additionally contain a vertex $s_i$ or a vertex $t_{idx_\Pi(i),pos_\Pi(i)}$, but, by nature of a solution to {\sc Orthogonal Vector Crafting}, not both, showing that each bag contains at most $17$ vertices and as such the decomposition indeed has width $16$. \qed
\end{proof}

\begin{lemma}
If $G$ has a tree decomposition of width $16$ with at most $|s|$ bags, then the instance of {\sc Orthogonal Vector Crafting} has a solution.
\end{lemma}
\begin{proof}
Suppose we have a tree decomposition of width $16$ with at most $|s|$ bags. Since for any $1\leq i \leq|s|$, $C_i \cup C_{i+1}$ induces a clique, there exists a bag that contains both $C_i$ and $C_{i+1}$. Moreover, since each clique $C_i$ contains $6$ vertices, each bag of the decomposition can contain at most two such cliques. Moreover, since the decomposition has at most (and thus we can assume exactly) $|s|$ bags, there exists exactly one bag containing both $C_i$ and $C_{i+1}$.

Since (for $1<i<|s|$) the bag containing $C_i$ and $C_{i+1}$ must be adjacent to the bag containing $C_{i-1}$ and $C_{i}$ and to the bag containing $C_{i+1}$ and $C_{i+2}$ we see that all but two bags have degree at least two and the remaining two bags have degree at least 1. Since a tree/path decomposition can not have cycles, we see that the tree/path decomposition must take the form of a path, where the bag containing $C_i$ and $C_{i+1}$ is adjacent to the bag containing $C_{i-1}$ and $C_{i}$ and to the bag containing $C_{i+1}$ and $C_{i+2}$.

Assume the bags of the decomposition are $X_1,\ldots X_s$, and $C_{i},C_{i+1}\subseteq X_i$.

Since in each of the bags we now have capacity for $5$ more vertices, we see that each bag contains exactly two cliques $T_{i,j},1\leq i\leq n, 1\leq j\leq |t_i|+1$ and that there exists a bag that contains both $T_{i,j}$ and $T_{i,j+1}$ for all $1\leq i\leq n, 1\leq j \leq |t_i|$. Moreover, for each $i$, the bags containing $\{T_{i,j} \cup T_{i,j+1} : 1\leq j \leq |t_i|\}$ must be consecutive and appear in that order (or in the reverse order, but by the palindromicity of $t_i$ this case is symmetric).

Note that at this point, each bag contains exactly $16$ vertices and has room for exactly 1 more vertex (which can be either an $s_i$ or a $t_{ij}$).

Thus, if we were to list the intersection of each bag $X_i$ with $\{T_{i,j}, 1\leq i\leq n, 1\leq j\leq |t_i|\}$, we would obtain the following sequence:
 
 $$ \{T_{\Pi(1), 1}, T_{\Pi(1), 2}\}, \ldots,  \{T_{\Pi(1), |t_{\Pi(1)}|}, T_{\Pi(1), |t_{\Pi(1)}|+1}\}$$
 $$\{T_{\Pi(2), 1}, T_{\Pi(2), 2}\}, \ldots,  \{T_{\Pi(2), |t_{\Pi(2)}|}, T_{\Pi(2), |t_{\Pi(2)}|+1}\}$$
 $$ \ldots$$ 
 $$ \{T_{\Pi(n), 1}, T_{\Pi(n), 2}\}, \ldots,  \{T_{\Pi(n), |t_{\Pi(n)}|}, T_{\Pi(n), |t_{\Pi(n)}|+1}\}$$

Which gives us the permutation $\Pi$ and string $t^\Pi$ we are after. Note that $s(i)$ and $t^\Pi(i)$ can not both be $1$, because otherwise the vertex $s_i$ (being adjacent to the vertices of $C_{i}$ and $C_{i+1}$) must go in bag $X_i$, but so must the vertex $t_{idx_\Pi(i),pos_\Pi(i)}$ which would exceed the capacity of the bag. \qed
\end{proof}

The size of the graph created in the reduction is $O(|s|)$, so we obtain a $2^{\Omega(n/\log n)}$ lower bound for $16$-MSPD under the Exponential Time Hypothesis. We can extend the reduction to $k>16$ by adding universal vertices to the graph.

For the tree decomposition case, note that a path decomposition is also a tree decomposition. For the reverse implication, we claim that a tree decomposition of $G$ of width $16$ with at most $|s|$ bags must necessarily be a path decomposition. This is because for each $1\leq i\leq |s|$, there exists a unique bag containing $C_i$ and $C_{i+1}$ which is adjacent to the bag containing $C_i-1$ and $C_{i}$ (if $i > 1$) and to the bag containing $C_{i+1}$ and $C_{i+2}$ (if $i<|s|$). All but two bags are thus adjacent to at least two other ones, and a simple counting argument shows that there therefore is no bag with degree greater than two (or we would not have enough edges).\qed
\end{maybeappendix}
\end{proof}
\FloatBarrier
\section{Intervalizing Coloured Graphs}

In this section, we consider the problem of intervalizing coloured graphs:

\begin{verse}
{\sc Intervalizing Coloured Graphs}\\
{\bf Given:} A graph $G=(V,E)$ together with a proper colouring $c:V\to \{1,2,\ldots,k\}$. \\
{\bf Question:} Is there an interval graph $G'$ on the vertex set $V$, for which $c$ is a proper colouring, and which is a supergraph of $G$?
\end{verse}

{\sc Intervalizing Coloured Graphs} is known to be $\NP$-complete, even for 4-coloured caterpillars (with hairs of unbounded length) \cite{4colcaterpillar}. In contrast with this result we require five colours instead of four, and the result only holds for trees instead of caterpillars. However, we obtain a $2^{\Omega(n/\log n)}$ lower bound under the Exponential Time Hypothesis, whereas the reduction in \cite{4colcaterpillar} is from {\sc Multiprocessor Scheduling} and to our knowledge, the best lower bound obtained from it is $2^{\Omega(\sqrt[5]{n})}$ (the reduction is weakly polynomial in the length of the jobs, which following from the reduction from {\sc 3-Partition} in \cite{3partitioneth} is $\Theta(n^4)$). In contrast to these hardness results, for the case with $3$ colours there is an $O(n^2)$ time algorithm \cite{UUCS199515,babette-journal}.

\begin{theorem}\label{thm:intervalizing}
{\sc Intervalizing Coloured Graphs} does not admit a $2^{o(n/\log n)}$-time algorithm, even for $5$-coloured trees, unless the Exponential Time Hypothesis fails.
\end{theorem}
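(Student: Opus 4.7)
My plan is to reduce from \textsc{Orthogonal Vector Crafting} in the palindromic variant established above: given an OVC instance $(s, t_1, \dots, t_n)$, I want to construct a proper $5$-colouring of a tree $T$ on $O(|s|)$ vertices such that $T$ is intervalizable respecting the colouring iff the OVC instance is solvable. The $2^{\Omega(|s|/\log|s|)}$ OVC lower bound then transports directly to the theorem.

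The overall design mirrors that of Theorems~\ref{thm:sgihard} and~\ref{thm:mspd}. Two colours (say $1,2$) alternate along a ``backbone'' caterpillar $P_s$ whose spine intervals are forced into a linear left-to-right layout; two further colours ($4,5$) alternate along one caterpillar $P_{t_j}$ per string $t_j$; and the remaining colour~$3$ is reserved for ``blocker'' gadgets placed at the $1$-positions of both $s$ and of each $t_j$. To meet the tree requirement I would hang every $P_{t_j}$ off a common attachment point of $P_s$ by a single edge, and add small endpoint anchor pendants (reusing the same five colours) to pin extreme intervals and break symmetries. A key design choice is that each colour-$3$ blocker should be rich enough---e.g.\ a short chain of vertices rather than a single pendant---that in any intervalization its interval is forced to span the overlap region between two specific consecutive spine intervals, so that same-colour disjointness of blockers directly translates into orthogonality of $s$ and $t^\Pi$.

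The forward direction is then routine: a solution $\Pi$ induces a left-to-right interval layout in which each $P_{t_{\Pi(k)}}$'s spine is embedded in the corresponding segment of $P_s$, and orthogonality guarantees that no two colour-$3$ blocker intervals are forced to overlap. The main obstacle I anticipate is the reverse direction: one must rigorously show that in every valid intervalization each $P_{t_j}$ embeds its spine into a contiguous run of consecutive $s$-spine intervals (so that the embeddings globally induce a permutation $\Pi$) and that colour-$3$ disjointness then forces $s(i)\cdot t^\Pi(i)=0$. I expect this forcing to rely on the colour-disjoint spine palettes $\{1,2\}$ and $\{4,5\}$ together with the anchor pendants, ruling out splits and interleavings; palindromicity of each $t_j$ absorbs the two possible orientations of an embedding. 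Because $|T|=O(|s|)$, a $2^{o(n/\log n)}$-time algorithm on the resulting $5$-coloured trees would yield a $2^{o(|s|/\log|s|)}$-time algorithm for OVC, contradicting the ETH.\qed
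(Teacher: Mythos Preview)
Your high-level plan---reduce from \textsc{Orthogonal Vector Crafting}, encode $s$ as a long spine, encode each $t_j$ as a short spine, and use a dedicated ``blocker'' colour at the $1$-positions---matches the paper's approach. However, there is a genuine gap in the central forcing mechanism. You assign the $s$-spine the palette $\{1,2\}$ and each $t$-spine the \emph{disjoint} palette $\{4,5\}$, and you state that ``I expect this forcing to rely on the colour-disjoint spine palettes''. In fact the opposite is true: if a $t$-spine vertex has a colour that appears nowhere on the $s$-spine, then nothing in the colouring constraints prevents that $t$-spine interval from being placed over \emph{any} $s$-spine interval, from being shrunk to lie entirely inside a single $s$-interval, or from straddling several $s$-intervals at once. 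Consequently there is no ``lockstep'' between the two spines, no reason for a $t$-spine to cover a contiguous run of $|t_j|$ positions of $s$, and hence no permutation $\Pi$ can be read off. Anchor pendants at the ends cannot repair this, because the slack is in the interior.

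The paper's construction resolves this by deliberately \emph{sharing} colours between the two spines: the $s$-path alternates colours $1,2$ (each even $s$-vertex getting a colour-$3$ pendant), while each $t$-path alternates colours $2,3$ (each odd $t$-vertex getting a colour-$1$ pendant). Now a colour-$2$ $t$-vertex cannot overlap a colour-$2$ $s$-vertex, and a colour-$3$ $t$-vertex cannot sit inside a colour-$1$ $s$-vertex (because of its colour-$3$ pendant); this interleaving is exactly what forces $p_{2k-1}\subset q_{i,2j-1}$ and $P(i,j+1)=P(i,j)+1$, i.e.\ contiguous alignment of each $t$-spine with $|t_j|$ consecutive $s$-positions. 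Colour~$4$ is used for the marking vertices at $1$-positions (playing the role of your blockers), and colour~$5$ is reserved solely for the single connector vertex that ties the tree together---note that in your scheme all five colours are already consumed by spines and blockers, so the connector has no safe colour. You also need rigid barrier gadgets at both ends of the $s$-spine to confine every non-barrier interval between them; single pendants are not enough. In short, the reduction stands or falls on colour \emph{overlap} between the two kinds of spines, not disjointness.
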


\begin{proof}
Let $s,t_1,\ldots,t_n$ be an instance of {\sc Orthogonal Vector Crafting}. We construct $G=(V,E)$ in the following way:

\paragraph*{S-String Path.} We create a path of length $2|s|-1$ with vertices $p_0,\ldots p_{2|s|-2}$, and set $c(p_i)=1$ if $i$ is even and $c(p_i)=2$ if $i$ is odd. Furthermore, for even $0\leq i\leq 2|s| - 2$, we create a neighbour $n_i$ with $c(n_i)=3$.

\paragraph*{Barriers.} To each endpoint of the path, we attach the \emph{barrier gadget}, depicted in Figure \ref{fig:barrier}. The gray vertices are not part of the barrier gadget itself, and represent $p_0$ and $n_0$ (resp. $p_{2|s|-2}$ and $n_{2|s|-2}$). Note that the barrier gadget operates on similar principles as the barrier gadget due to Alvarez et al. \cite{4colcaterpillar}. We shall refer to the barrier attached to $p_0$ as the left barrier, and to the barrier attached to $p_{2|s|-2}$ as the right barrier.

The barrier consists of a central vertex with colour $1$, to which we connect eight neighbours (\emph{clique vertices}), two of each of the four remaining colours. Each of the clique vertices is given a neighbour with colour $1$. To one of the clique vertices with colour $2$ we connect a vertex with colour $3$, to which a vertex with colour $2$ is connected (\emph{blocking vertices}). This clique vertex shall be the barrier's \emph{endpoint}. Note that the neighbour with colour $1$ of this vertex is not considered part of the barrier gadget, as it is instead a path vertex. We let $C_l$ ($e_l)$ denote the center (endpoint) of the left barrier, and $C_r$ ($e_r$) the center (endpoint) of the right barrier.

\begin{figure}[h]
     \centering
     \hfill
     \subfloat[][Barrier Gadget] {
         \includegraphics[scale=0.9]{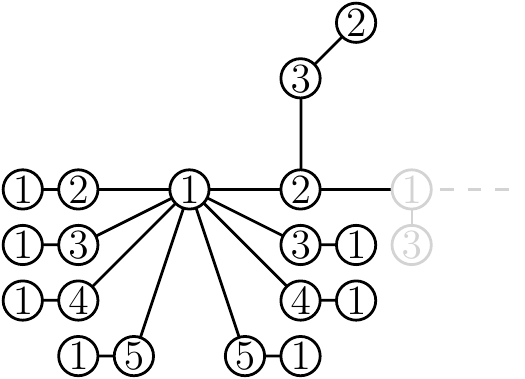}
     }
     \hfill
     \subfloat[][Interval Representation] {
         \includegraphics[scale=0.9]{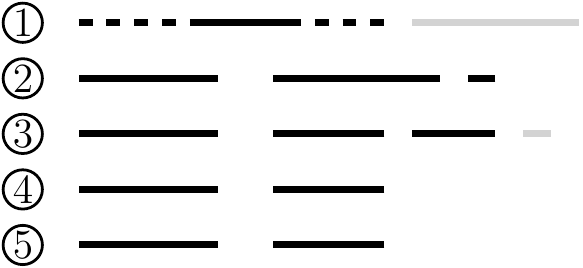}
     }
     \hfill\null
     \caption{(a) Barrier Gadget. The gray vertices are not part of the barrier gadget itself, and show how it connects to the rest of the graph. (b) How the barrier gadget may (must) be intervalized.}
     \label{fig:barrier}
\end{figure}

\paragraph*{T-String Paths.} Now, for each string $t_i$, we create a path of length $2|t_i|+1$ with vertices $q_{i,0},\ldots,q_{i,2|t_i|}$ and set $c(q_{i,j})=3$ if $j$ is odd and set $c(q_{i,j})=2$ if $j$ is even. We make $q_{i,1}$ adjacent to $U$. Furthermore, for odd $1\leq j\leq 2|t_i|-1$, we create a neighbour $m_j$ with $c(m_j)=1$. We also create two \emph{endpoint vertices} of colour $3$, one of which is adjacent to $q_{i,0}$ and the other to $q_{i,2|t_i|}$,

\paragraph*{Connector Vertex.} Next, we create a \emph{connector vertex} of colour $5$, which is made adjacent to $p_1$ and to $q_{i,1}$ for all $1\leq i\leq n$. This vertex serves to make the entire graph connected.

\paragraph*{Marking Vertices.} Finally, for each $1\leq i\leq |s|$ (resp. for each $1\leq i\leq n$ and $1\leq j\leq |t_i|$), if $s(i)=1$ (resp. $t_i(j)=1$), we give $p_{2i-1}$ (resp. $q_{i,2j-1}$) two neighbours (called the \emph{marking vertices}) with colour $4$. For each of the marking vertices, we create a neighbour with colour $3$.

This construction is depicted in Figure \ref{fig:intervalizing-red}. In this example $s=10100$, $t_1=01$ and $t_2=001$. Note that this instance of {\sc Orthogonal Vector Crafting} is illustrative, and does not satisfy the restrictions required in the proof.

\begin{figure}[h]
    \centering
    \includegraphics[width=0.95\textwidth]{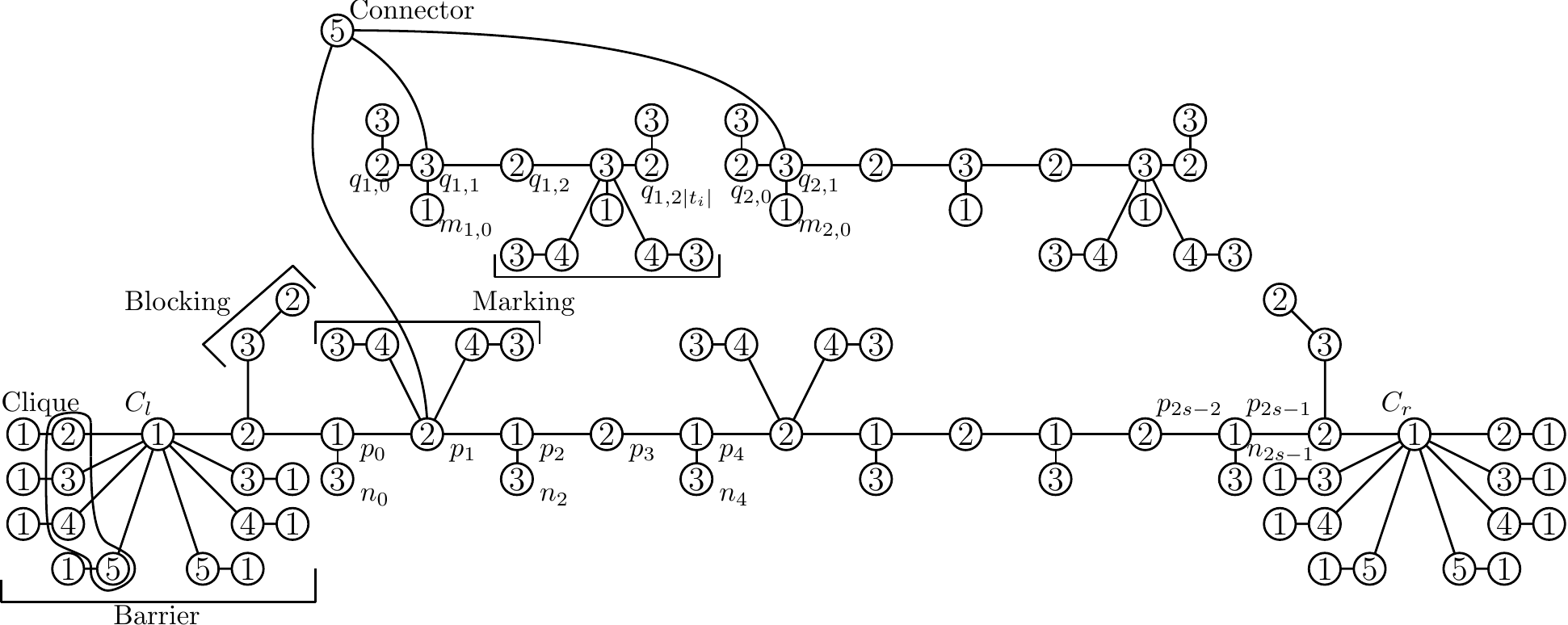}
    \caption{Example of the graph created in the hardness reduction for Theorem \ref{thm:intervalizing}.}
    \label{fig:intervalizing-red}
\end{figure}

Informally, the construction works as follows: the barriers at the end of the path of $p$-vertices can not be passed by the remaining vertices, meaning we have to "weave" the shorter $q$-paths into the long $p$-path. The colours enforce that the paths are in "lockstep", that is, we have to traverse them at the same speed. We have to map every $q$-vertex with colour $3$ to a $p$-vertex with colour $2$, but the marking vertices prevent us from doing so if both bitstrings have a $1$ at that particular position.
\ifappendix See Appendix~\ref{app:fewcol} for the formal proof of correctness.\fi

\begin{maybeappendix}{app:fewcol}
\begin{lemma}\label{lem:invervalizing-if}
$G$ can be intervalized if the {\sc Orthogonal Vector Crafting} instance has a solution.
\end{lemma}

\begin{proof}
As an example, Figure \ref{fig:intervalizing-solution} shows how the graph from Figure \ref{fig:intervalizing-red} may be intervalized. Let $\Pi$ be a solution to the instance of {\sc Orthogonal Vector Crafting}. We can intervalize the barriers as depicted in Figure \ref{fig:barrier}b, noting that the right barrier should be intervalized in a mirrored fashion. The connector vertex (which is the only remaining vertex of colour $5$) can be assigned an interval that covers the entire interval between the two barrier gadgets. If no marker vertices are present, then we can weave the $q$-paths into the $p$-path as depicted in Figure \ref{fig:intervalizing-solution}, whereby each interval corresponding to a $q$-vertex of colour $3$ completely contains the interval of a $p$-vertex of colour $2$ (note that the endpoint vertices of the $q$-paths are treated differently from the $q$-path vertices with colour $3$). If we intervalize the $q$-paths in the same order the corresponding strings appear in $\Pi$, then we can also intervalize the marking vertices: Figure \ref{fig:intervalizing-solution} shows that the marker vertices can also be intervalized, so long as a $p$-vertex and its corresponding $q$-vertex are not both adjacent to marker vertices, but this is guaranteed by the orthogonality of $s$ and $t^\Pi$.
%This completes the proof of Lemma \ref{lem:invervalizing-if}.
\qed
\end{proof}

\begin{figure}[b]
    \centering
    \includegraphics[width=0.95\textwidth]{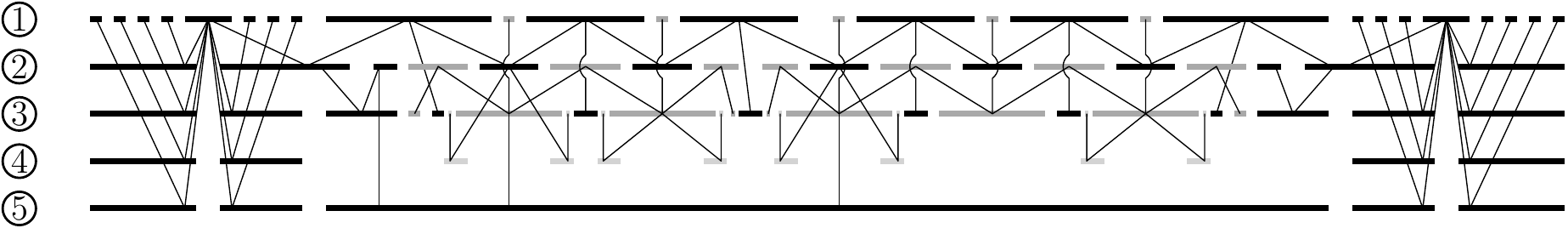}
    \caption{How the graph from Figure \ref{fig:intervalizing-red} may be intervalized. Thick lines represent intervals and are arranged vertically based on the colour of the associated vertex. The thinner lines indicate the edges of the original graph. Black intervals correspond to the barriers, $p,n$-vertices and the connector vertex, light gray intervals to maker vertices and dark gray intervals to end point vertices and to $q,m$-vertices.}
    \label{fig:intervalizing-solution}
\end{figure}

\begin{lemma}\label{lem:invervalizing-onlyif}
$G$ can be intervalized only if the {\sc Orthogonal Vector Crafting} instance has a solution.
\end{lemma}

\begin{proof}
Suppose we are given a properly coloured interval supergraph of $G$ that assigns to each vertex $v\in V$ its left endpoint $l(v)$ and right endpoint $r(v)$. For vertices $u,v\in V$, we write $v\subset u$ if $l(u) < l(v) < r(v) < r(u)$, and we write $v<u$ if $r(v)<l(u)$. We write $v\lessdot u$ if $l(v) < l(u) < r(v) < r(u)$ - that is, the interval of $v$ starts to the left of the interval of $u$ and the two intervals overlap.

We may without loss of generality assume that $C_l < C_r$ and that no two endpoints of intervals coincide.

\begin{claim}
For any non-barrier vertex $v$, we have that $r(C_l) < l(v) < r(v) < l(C_r)$.
\end{claim}

\begin{proof}
Examining the situation for the left barrier, we see that a clique vertex can not be contained completely inside the interval of the center vertex $C_l$ since it is adjacent to another vertex with colour $1$ (whose interval may not intersect that of the center). Since there are two clique vertices of each colour, for each colour, the interval of one clique vertex of that colour must extend to the left of the central vertex' interval and the other must extend to the right. Therefore, these intervals that contain $r(c_l)$ induce a clique of size $5$. Since we are looking for a $5$-coloured supergraph, no other intervals can contain $r(c_l)$.

Note that the clique vertices are interchangeable, except the ones coloured $2$: the clique vertex that is adjacent to $p_0$ must go to the right of the center vertex, since otherwise the path from it to $C_r$ could not pass the clique at $r(c_l)$.

Suppose that for some non-barrier vertex $v$, it holds that $l(v) \leq r(c_r)$. This is not possible, since the path from $v$ to $c_r$ can not pass the clique induced at $r(c_l)$. Therefore, $r(c_l) < l(v)$.

The case for the right barrier is symmetric. \qed
\end{proof}

\begin{claim}
For all $0\leq i < 2|s|-1$, we have that $p_i \lessdot p_{i+1}$. Furthermore, $C_l\lessdot e_l \lessdot p_0$ and $p_{2|s|-1}\lessdot e_r\lessdot C_r$.
\end{claim}

\begin{proof}
The fact that $C_l\lessdot e_l \lessdot p_0$ (resp. $p_{2|s|-1}\lessdot e_r\lessdot C_r$) follows from the analysis of the barrier gadget in the previous claim. We now proceed by induction, for the purposes of which we shall write $e_l=p_{-1}$ and $e_r=p_{2|s|}$: suppose the claim holds for $i-1$, i.e. $p_{i-1}\lessdot p_{i}$. It can not hold that $p_{i+1}\subset p_i$, since it is adjacent to $p_{i+2}$, nor can it hold that $p_{i+1}\lessdot p_i$ (since then it would intersect the interval of $p_{i-1}$. \qed
\end{proof}

For any $1\leq i\leq n$, a similar fact holds for the path $q_{i,0},\ldots,q_{i,2|t_i|}$:

\begin{claim}
We may without loss of generality assume that $q_{i,0}\lessdot \ldots \lessdot q_{i,2|t_i|}$.
\end{claim}

\begin{proof}
By a similar induction argument as above, it must hold that either $q_{i,0}\lessdot \ldots \lessdot q_{i,2|t_i|}$ or $q_{i,0}\gtrdot \ldots \gtrdot q_{i,2|t_i|}$. However, since $t_i$ is palindromic, these two cases are equivalent. \qed
\end{proof}

\begin{claim}
Let $1\leq i \leq n, 1\leq j\leq |t_i|$. Then there exists $1\leq k\leq |s|$ such that $p_{2k-1} \subset q_{i,2j-1}$.
\end{claim}

\begin{proof}
The interval of $q_{i,2j-1}$ can not be completely contained in the interval of a vertex with colour $1$, since $q_{i,2j-1}$ is adjacent to $m_{i,2j-1}$ which has colour $1$ as well. Therefore (since $r(C_l) < l(q_{i,2j-1}) < r(q_{i,2j-1}) < l(C_r)$), the interval of $q_{i,2j-1}$ must intersect the interval of either a barrier endpoint or a path vertex $p_{2k-1}$ for some $1\leq k\leq |s|$. It is not possible that $q_{i,2j-1}$ intersects the interval of a barrier endpoint, since (due to the blocking and clique vertices with colour $3$, see Figure \ref{fig:barrier}) it would have to be completely contained inside this interval, which is impossible since $q_{i,2j-1}$ is adjacent to vertices with colour $2$. Therefore there exists a $1\leq k\leq |s|$ such that the interval of $q_{i,2j-1}$ intersects that of $p_{2k-1}$.

Since $q_{i,2j-2} \lessdot q_{i,2j-1} \lessdot q_{i,2j}$ and $c(q_{i,2j-2})=c(q_{i,2j})=c(p_{2k-1})=2$ we must have that $q_{i,2j-2}<q_{2k-1}<q_{i,2j}$. It now follows that $p_{2k-1} \subset q_{i,2j-1}$. \qed
\end{proof}

This allows us to define the \emph{position} $1\leq P(i,j)\leq |s|$ for each $1\leq i \leq n, 1\leq j\leq |t_i|$, which is equal to the $k$ from the previous claim. Note that $P$ is a bijection, since each interval $p_{2k-1}$ is the subset of (the interval of) exactly one $q$-vertex (each $p$-vertex interval can not be the subset of more than one $q$-vertex interval, and the number of $p$-vertices is such that no $q$-vertex can completely contain more than one $q$-vertex interval).

\begin{claim}
Let $1\leq i \leq n, 1\leq j\leq |t_i|-1$. Then $P(i,j+1)=P(i,j)+1$. 
\end{claim}

\begin{proof}
This follows from the fact that a $q$-vertex with colour $3$ can not completely contain a $p$-vertex with colour $1$ (since it has an $n$-vertex as neighbour that has colour $1$ as well) and the fact that a $q$-vertex with colour $2$ can not overlap a $p$-vertex with colour $2$.

Formally, let $P(i,j) = k$, then $p_{2k-1}\subset q_{i,2j-1}$. Since $q_{i,2j-1}\lessdot q_{i,2j}\lessdot q_{i,2j+1}$, if the claim does not hold, we must have $p_{2k+1+2m} \subset q_{i,2j+1}$ for some $m > 0$. We must have $q_{i,2j}\subset p_{2k+2r}$ for some $r \geq 0$. If $r=0$, then $q_{i,2j} < q_{i,2j+1}$ which is a contradiction. On the other hand, if $r > 0$ then $q_{i,2j - 1} < q_{i,2j}$ which is also a contradiction. \qed
\end{proof}

\begin{claim}
Let $\Pi$ be the permutation of $\{1,\ldots,n\}$ such that $P(idx_\Pi(i),pos_\Pi(i))=i$. Then $\Pi$ exists and is a solution to the {\sc Orthogonal Vector Crafting} instance.
\end{claim}
\begin{proof}
The existence of $\Pi$ follows from the previous claim. Suppose that $\Pi$ is not a solution. Then there exists an $i$, such that $S(i)=T_{idx_\Pi(i)}(pos_\Pi(i))=1$. However, this means that $p_{2i-1}\subset q_{idx_\Pi(i),2pos_\Pi(i)-1}$. Since both $p_{2i-1}$ and $q_{idx_\Pi(i),2pos_\Pi(i)-1}$ have marking vertices, this is impossible as the marking vertices with colour $4$ would have overlapping intervals. \qed
\end{proof}

This completes the proof of Lemma \ref{lem:invervalizing-onlyif}. \qed
\end{proof}
\end{maybeappendix}

The number of vertices of $G$ is linear in $|s|$, and we thus obtain a $2^{o(n/\log n)}$ lower bound under the Exponential Time Hypothesis. \qed
\end{proof}

Note that the graph created in this reduction only has one vertex of super-constant degree. This is tight, since the problem is polynomial-time solvable for bounded degree graphs (for any fixed number of colours) \cite{boundeddegree}.

To complement this result for a bounded number of colours, we  also show 
a $2^{\Omega(n)}$-time lower bound for graphs with an unbounded number of colors, assuming the ETH.  Note that this result implies that the algorithm from \cite{intervalizing-exact} is optimal. \ifappendix\else A complication in the proof is that to obtain the stated bound, one can only use (on average) a constant number of vertices of each colour (when using $O(n)$ colours). A variation on the previous proof whereby instead of using bitstrings, colours are used to identify clauses and variables is thus unlikely to work since one would need to repeat each colour many times (in each place where a particular bitstring does not fit).\fi

\ifappendix \begin{theorem}[Appendix \ref{app:manycol}.]\else \begin{theorem}\fi\label{thm:manycols}
Assuming the Exponential Time Hypothesis, there is no algorithm solving {\sc Intervalizing Coloured Graphs} in time $2^{o(n)}$, even when restricted to trees.
\end{theorem}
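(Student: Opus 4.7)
The plan is to reduce directly from \textsc{3-Satisfiability}, applying the sparsification lemma to assume $m = O(n)$ clauses, so that the ETH forbids any $2^{o(n)}$-time algorithm. I will construct in polynomial time a coloured tree $T$ with $O(n)$ vertices and $O(n)$ colours such that $T$ is intervalizable with the given colouring if and only if the 3-SAT instance is satisfiable; any $2^{o(n)}$-time algorithm for the tree case of \textsc{Intervalizing Coloured Graphs} would then decide 3-SAT in $2^{o(n)}$ time, contradicting the ETH. The reason the \textsc{Orthogonal Vector Crafting} route used earlier does not suffice is that it costs a $\log n$ factor: each variable/clause index is encoded as a $\Theta(\log n)$-bit string, so an $n$-variable formula already inflates to $\Theta(n \log n)$ positions. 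Here, having an unbounded palette available, I instead give each variable and clause its own small batch of dedicated colours.

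The construction has three ingredients. The first is a rigid \emph{backbone} tree whose vertices receive pairwise distinct colours; because each colour appears only once, any proper intervalization must lay the backbone out in one specific left-to-right order (up to global reflection), giving the remaining gadgets a one-dimensional coordinate system along which they must be placed. The second is, for each variable $x_i$, a small \emph{variable gadget} $V_i$ attached to a fixed backbone position, designed so that $V_i$ admits exactly two orientations in any valid intervalization, distinguished by whether a certain ``literal interval'' of $V_i$ lies to the left or to the right of the attachment point; one orientation encodes $x_i = \mathrm{true}$, the other $x_i = \mathrm{false}$. The third is, for each clause $C_j = l_{j,1} \lor l_{j,2} \lor l_{j,3}$, a small \emph{clause gadget} $W_j$ whose central witness vertex receives a colour chosen to coincide with exactly one interval per literal $l_{j,k}$, namely the interval that is freed when the variable of $l_{j,k}$ is oriented so as to satisfy it. Thus the witness can be placed iff $C_j$ has at least one satisfied literal. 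Every gadget contributes $O(1)$ vertices and $O(1)$ fresh colours, keeping the total size and number of colours linear in $n$ and every colour class of bounded size.

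The main obstacle is engineering the gadgets to simultaneously meet three competing requirements: (i) the union with the backbone must stay a tree, which forbids direct clause-to-variable edges across the backbone and forces the satisfaction signal to propagate purely through colour conflicts in the interval model; (ii) each colour class must remain of constant size, so the entire ``clause checks its three literals'' mechanism has to be carried by a handful of carefully placed colour coincidences per clause; and (iii) no unintended intervalization may exist, which requires local barrier-style gadgetry analogous to that of Theorem~\ref{thm:intervalizing} to pin down the backbone orientation and to prevent any clause witness from slipping into an unrelated slot. Once the construction is in place, correctness will follow the same two-direction pattern as in Theorem~\ref{thm:intervalizing}: a satisfying assignment yields an explicit intervalization by orienting each $V_i$ accordingly and placing each clause witness in a slot vacated by one of its satisfied literals; conversely, from any valid intervalization one reads off an assignment from the orientations of the $V_i$ gadgets and observes that the forced placement of each $W_j$'s witness certifies a satisfied literal of $C_j$.
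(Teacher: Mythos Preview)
Your proposal takes a different route from the paper: the paper reduces from \textsc{Exact Cover by 3-Sets}, not directly from \textsc{3-Satisfiability}. Each 3-set $X_i$ becomes a small ``set component''; a backbone path offers exactly $m - n/3$ slots for such components, and the remaining $n/3$ must all be packed into a single common region bordered by vertices $v_1,\ldots,v_n$ carrying pairwise distinct element colours $e_1,\ldots,e_n$. Two set components can coexist in that region iff their underlying 3-sets are disjoint, so a valid intervalization forces an exact cover. The point is that X3C's constraint is \emph{pairwise disjointness}, which translates directly into ``same-coloured intervals must not overlap''; no OR-of-three mechanism is ever needed.

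Your plan, by contrast, hinges on precisely such a mechanism, and that is where the gap lies. You say the clause witness shares a colour with one designated interval per literal and ``can be placed iff $C_j$ has at least one satisfied literal,'' but you do not explain what forces the witness into one of those three slots rather than simply sitting near its own attachment point. In the tree, the witness is adjacent only to vertices of its own gadget and to the backbone, so its interval is pinned to a single backbone location; sharing a colour with vertices in three distant variable gadgets only \emph{forbids overlap} with them --- it does not compel the witness to reach any of them. As written, the witness can always be intervalized locally, disjoint from all three literal intervals, regardless of how the $V_i$ are oriented, and the clause is never actually checked. Realising an OR-of-three test through colour conflicts alone, in a tree, with $O(1)$ vertices and $O(1)$ colours per clause, is the entire crux of a direct 3-SAT reduction here, and your sketch stops just before it. (If you instead give each literal occurrence its own slot adjacent to the witness, you need a separate consistency mechanism tying together all occurrences of each variable, which reintroduces the same difficulty.) The paper's choice of X3C sidesteps this cleanly, which is why it is the natural source problem here.
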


\begin{maybeappendix}{app:manycol}
\begin{proof}\ifappendix \emph{(Theorem \ref{thm:manycols}.)} \fi
By reduction from {\sc Exact Cover by 3-Sets (X3C)}. {\sc X3C} is the following problem: given a set $X$ with $|X|=n$ and a collection $M$ of subsets $X_1,\ldots, X_m$ of size $3$, decide whether there exists a subset $M'$ of size $n/3$ such that $\bigcup M' = X$. Assuming the Exponential Time Hypothesis, there does not exist an algorithm solving {\sc X3C} in time $2^{o(m)}$ (See e.g. \cite{mspd-springer,vectorscheduling,garey-johnson}. Note that {\sc 3-Dimensional Matching} is a special case of {\sc X3C}.).

The intuition behind the reduction is that, to leverage the large number of colours, one needs to force the creation of a very large clique in the intervalized graph. For each element of $X_i\in M$, we create a component of which the vertices have colours that correspond to the elements of $X_i$. The graph created in the reduction will be so that all but $n/3$ of the components corresponding to elements of $M$ can be placed in distinct intervals, but the remaining $n/3$ components will have to overlap. This, in turn, is only possible if no two components in this latter collection contain duplicated colours, that is, each element is represented at most (and thus exactly) once in the selected $n/3$ components.

We assume the elements of $X$ are labelled $1,\ldots,n$. We may assume that $n$ is a multiple of $3$ (or we may return a trivial no-instance) and that $m > n / 3$ (or we can check immediately whether $M$ covers $X$).

The graph created in the reduction has $n+4$ colours: two colours $e_i$ and $f_i$ for each $1\leq i \leq n$ and four additional colours $a,b,c,d$.

We construct the graph $G$ as follows: we start with a path of $2(m-n/3) + 1$ vertices $p_0,\ldots,p_{2(m-n/3)}$, where $p_i$ has colour $a$ if $i$ is even and $p_1$ has colour $b$ if $i$ is odd. To $p_{2(m-n/3)}$ we make adjacent a vertex with colour $d$, which we shall refer to as the \emph{right barrier} vertex.

Next, for each $1\leq i\leq n$, we create a vertex $v_i$ with colour $e_i$, that is made adjacent to $p_0$. For each $i$, we create an additional vertex with colour $d$ that is made adjacent to $v_i$. These vertices (with colour $d$) are called the \emph{left barrier} vertices.

Next, for each 3-set $X_i\in M$, we create a \emph{set component}, consisting of a \emph{central vertex} with colour $a$, to which are made adjacent: two vertices with colour $c$, each of which is adjacent to a vertex with colour $b$ and, for each of the three elements of $X_i$, two vertices with the corresponding (to that element) colour $f_i$, each of which is made adjacent to a vertex with colour $e_i$.

Finally, we connect all the components together with a \emph{connector vertex} of colour $d$, which is made adjacent to each set gadget and to a vertex of the path (for instance to $p_0$). Figure \ref{fig:intervalizing-many} provides an example of the construction.

As with the case for five colours, a solution somehow has to ``pack'' the set components into the part of the graph between the barriers: $m-n/3$ of the set components can be packed into the path (each interval corresponding to a $b$-vertex can hold at most one component); the remaining $n/3$ components have to be packed between the left barrier vertices and $p_0$ and this is possible only if the corresponding sets are disjoint (otherwise intervals of the corresponding colours would have to overlap).

\begin{figure}[h]
    \centering
    \includegraphics[width=0.95\textwidth]{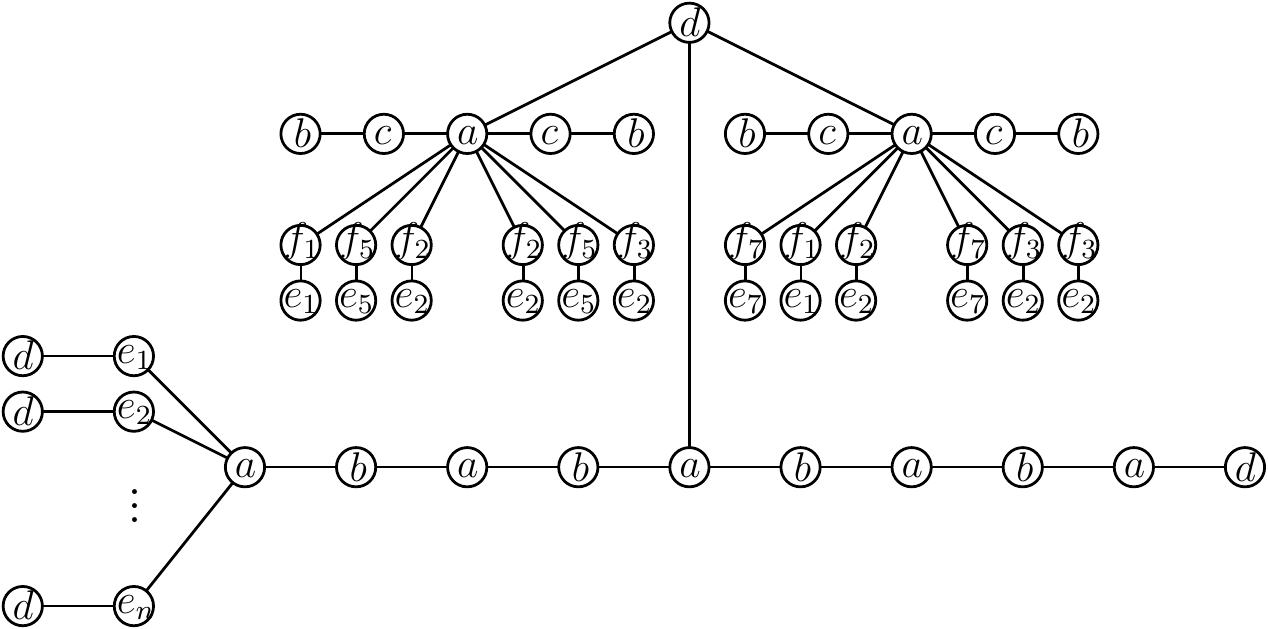}
    \caption{The construction used in the proof of Theorem \ref{thm:manycols}. In this example, $X=\{\{1,5,2\},\{7,1,2\},\ldots\}$. The topmost vertex with colour $d$ is the connector vertex, and its degree would increase as more elements are added to $X$.}
    \label{fig:intervalizing-many}
\end{figure}

\begin{lemma}
If $G$ can be intervalized, then the {\sc X3C} instance has a solution.
\end{lemma}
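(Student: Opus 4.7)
The goal is to extract an exact cover from any proper-coloured interval realisation of $G$. My plan is to first pin down the forced global layout using the $d$-coloured vertices, then count how many set components can be accommodated in the ``path region'' versus the ``left region'', and finally argue that the components forced into the left region correspond to a pairwise disjoint family of $3$-sets, hence to an exact cover.

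\emph{Step 1 (Global layout).} Any two intervals of colour $d$ must be disjoint. The $d$-coloured vertices are: the right barrier (adjacent only to $p_{2(m-n/3)}$), the $n$ left barrier vertices (each adjacent to a single $v_i$, of colour $e_i$, which in turn is adjacent to $p_0$), and the connector vertex (adjacent to $p_0$ and to every set component). First I would use these adjacencies, together with the alternating $a/b$ colouring of the path and the fact that the $v_i$'s have pairwise distinct colours $e_i$, to show that along the real line the layout is essentially forced: the left barriers cluster on the far left, then the $v_i$'s, then $p_0$, then the path extends rightwards to $p_{2(m-n/3)}$ followed by the right barrier, while the connector vertex's $d$-interval lies strictly between the left barriers and the right barrier and spans everything it has to meet.

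\emph{Step 2 (Slot counting).} Every set component has a central $a$-vertex that must overlap two $c$-neighbours, each of which in turn overlaps a $b$-neighbour. The path $a$-vertices $p_0,p_2,\ldots,p_{2(m-n/3)}$ partition the path region into $m-n/3$ open gaps, each containing exactly one path $b$-vertex. I would show that at most $m-n/3$ set components can have their centres inside the path region: a centre cannot overlap a path $a$-vertex, and two centres squeezed into the same gap would need two $b$-neighbours to coexist with the one path $b$-vertex already present, exceeding colour capacity. Hence at least $n/3$ set components have their centres to the left of $p_0$; since at most $n/3$ centres can also fit between the left barriers and $p_0$ (by a symmetric argument using the $v_i$'s as $e_i$-barriers together with the connector $d$-interval), exactly $n/3$ components occupy the left region.

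\emph{Step 3 (Disjointness in the left region).} Let $Y_{j_1},\ldots,Y_{j_{n/3}}$ be the set components placed in the left region, ordered left to right by their central $a$-vertices; all of them lie inside the span of the connector's $d$-interval and among the $v_i$'s. For any element $\ell\in X$, each $Y_{j_k}$ with $\ell\in X_{j_k}$ contributes two $f_\ell$-coloured vertices whose intervals must overlap that component's centre. I would argue that if two distinct components in the left region share an element $\ell$, then two $f_\ell$-intervals from different components are forced to overlap (since the intervening $v_\ell$ of colour $e_\ell$ already occupies the only $e_\ell$-slot outside the components, and the $e_\ell$-neighbours internal to each $Y_{j_k}$ must sit next to their $f_\ell$-partners), contradicting properness of the colouring. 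Hence the sets $X_{j_1},\ldots,X_{j_{n/3}}$ are pairwise disjoint, and as $3\cdot n/3=n$ they form an exact cover of $X$.

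\emph{Main obstacle.} The crux is Step 2, together with the overlap argument in Step 3. For Step 2 I must rule out exotic configurations where a component ``straddles'' multiple slots, or where two centres share a gap by borrowing $b$-intervals from elsewhere, which needs a careful case analysis based on how each component's $a$-, $b$- and $c$-vertices are nested. For Step 3 the geometric claim is intuitive but delicate, since one must exploit the specific combination of forced overlaps (connector, centres, $c$- and $f$-vertices) to conclude that two $f_\ell$-intervals belonging to different left-region components genuinely collide, rather than merely being nearby.
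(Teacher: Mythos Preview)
Your overall approach matches the paper's: identify where each set component's central $a$-vertex must sit (either inside the interval of some odd path vertex $p_i$ of colour $b$, or in the ``left region'' covered by all the $v_j$'s), bound the number of slots in the path region by $m-n/3$, and then argue that any two components forced into the left region must correspond to disjoint $3$-sets via an $f_\ell/e_\ell$ collision. The paper proves exactly these two claims and draws the same conclusion.

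Two points where your write-up diverges and should be corrected. First, in Step~2 your stated reason why two centres cannot share a gap --- that their $b$-neighbours would ``exceed colour capacity'' together with the path $b$-vertex --- is not the right mechanism: the components' $b$-vertices are forced \emph{outside} the gap (since the gap is the interval of an odd $p_i$, which already has colour $b$), so there is no $b$-collision inside. The paper's argument uses the $c$-vertices instead: each centre inside $p_i$ has two $c$-neighbours whose $b$-neighbours cannot meet $p_i$, hence one $c$-interval contains the left endpoint of $p_i$ and the other the right endpoint; a second component would then need its own $c$-vertices to cross the same endpoints, giving two $c$-intervals overlapping there. Second, you do not need to argue that \emph{exactly} $n/3$ components land in the left region, nor to bound that region's capacity separately; ``at least $n/3$'' together with pairwise disjointness already yields an exact cover, since $n/3$ pairwise disjoint $3$-subsets of an $n$-element set necessarily cover it. Dropping that extra step also sidesteps the ``symmetric argument using the $v_i$'s as $e_i$-barriers'' you mention, which is not straightforward because the $v_i$'s have pairwise distinct colours rather than a common one.
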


\begin{proof}
In any interval supergraph of $G$, the interval of the connector vertex should lie between the intervals of the left barrier vertices and the right barrier vertex. Let $X_i\in M$. The interval of the central vertex of the set component corresponding to $X_i$ must either be contained in the interval corresponding to some vertex $p_i$ for some odd $1\leq i < 2(m-n/3)$ or it should be contained in the intersection of the intervals corresponding to the vertices $\{v_j \mid 1\leq j\leq n\}$ (since its interval can not intersect any interval of a vertex with colour $a$, nor can its interval be contained in the interval of a left or right barrier vertex).

\begin{claim}
At most one interval corresponding to a central vertex can be contained in the interval of any vertex $p_i$ (for odd $i$).
\end{claim}
\begin{proof}
Each central vertex is adjacent to two vertices with colour $c$. Since these vertices in turn have neighbours with colour $b$, it follows that (if some central vertex is contained in the interval of $p_i$) the interval of one of the vertices with colour $c$ must contain the left endpoint of the interval of $p_i$, and the other must contain the right endpoint. Therefore the interval of $p_i$ can not contain another central vertex. \qed
\end{proof}

\begin{claim}
Let $X_i\not = X_j\in M$. If $X_i\cap X_j\not = \emptyset$, then the intervals of the central vertices of the set components corresponding to $X_i$ and $X_j$ can not both be contained in the intersection of the intervals corresponding to the vertices $\{v_k \mid 1\leq k\leq n\}$.
\end{claim}
\begin{proof}
Since $X_i\cap X_j\not = \emptyset$, both central vertices have two neighbours with color $f_m$ for some $1\leq m\leq n$. Each of these vertices has a neighbour with colour $e_m$, and thus the interval of a vertex with colour $f_m$ must contain either the left or the right endpoint of the interval that is the intersection of the intervals corresponding to the vertices $\{v_k \mid 1\leq k\leq n\}$. This is not possible, since either the left or the right endpoint of this intersection will be contained in more than one interval corresponding to a vertex with colour $f_m$. \qed
\end{proof}

We thus see that the elements of $M$ that correspond to set components whose intervals are contained in the intersection of the intervals corresponding to the vertices $\{v_k \mid 1\leq k\leq n\}$ form a solution: no element of $X$ is contained in more than one of them. As at most $|X|-n/3$ intervals of set components are contained in intervals corresponding to some vertex $v_i$, we have that at least $n/3$ set components have intervals that are contained in the aforementioned intersection. These must thus form a solution to the {\sc X3C} instance. \qed
\end{proof}

\begin{lemma}
If the {\sc X3C} instance has a solution, then $G$ can be intervalized.
\end{lemma}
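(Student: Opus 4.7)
The plan is to construct the interval representation in three layers. First, I would arrange the path vertices $p_0, p_1, \ldots, p_{2(m-n/3)}$ as a chain of consecutively overlapping intervals on the real line so that each odd-indexed $p_i$ has a ``private'' sub-interval not contained in any other $p_j$, and attach the right barrier vertex as a short interval just past the right end of $p_{2(m-n/3)}$. To the left of $p_0$ I would carve out a ``staging region'' $R=[\alpha,\beta]$ with $\alpha$ far to the left of $p_0$ and $\beta$ inside $p_0$; every $v_k$ is given the same interval $R$, so the common intersection of all $v_k$'s is exactly $R$. The $n$ left barrier vertices get pairwise disjoint sub-intervals in the leftmost portion of $R$; although they all share colour $d$ they do not overlap each other, and their incidental overlap with every $v_k$ is harmless because the colours $e_k$ differ from $d$. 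The connector vertex (also colour $d$) receives a long interval starting just right of the last left barrier and ending just left of the right barrier, so it overlaps $p_0$ and every set component we are about to add.

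For each $X_i\in M\setminus M'$ I would place the corresponding set component entirely inside the private region of a distinct odd-indexed $p_i$. The central vertex (colour $a$) is nested inside this private region; its two colour-$c$ neighbours extend slightly out through the left and right ends of $p_i$ into the adjacent intervals of $p_{i-1}$ and $p_{i+1}$ (colour $a$, no clash with $c$); and their colour-$b$ pendants sit as tiny intervals inside the region covered by $p_{i\pm 1}$ but outside $p_i$, which is allowed because $p_{i\pm 1}$ has colour $a$. The six colour-$f$ and six colour-$e$ vertices of this component can be nested inside the central vertex's interval, and their colour-$e$ pendants are disjoint from every $v_k$ because they sit far to the right of $R$. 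Different $M\setminus M'$ components live in disjoint private regions and therefore do not interfere with each other.

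The crucial step is packing the remaining $n/3$ components for $M'$ into the staging region $R$, and this is where the exactness of the cover is decisive. I would place the central vertices $u_1,\ldots,u_{n/3}$ as pairwise disjoint small intervals in the right half of $R$ (leaving the left half for the left barriers), with narrow gaps between them to accommodate the $c$-neighbours and their $b$-pendants; since $R$ lies strictly to the left of $p_1$, these $b$-pendants avoid every colour-$b$ vertex in the path. For each $X_i\in M'$ and each $x\in X_i$, I would extend one of the two colour-$f_x$ neighbours of $u_i$ rightwards past $\beta$ into $p_0$'s interval and the other leftwards past $\alpha$, so that the attached colour-$e_x$ pendants land strictly outside $v_x=R$; this is mandatory because the pendants and $v_x$ share colour $e_x$ and must be disjoint. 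The main obstacle, and the reason the exact cover hypothesis is needed, is ruling out monochromatic collisions among these outward-routed $f$- and $e$-intervals: because $M'$ is a partition of $X$, each colour $f_x$ and each colour $e_x$ is used by exactly one component of $M'$, so no two routes can collide on the same colour. The remaining verifications---that every edge of $G$ corresponds to an interval overlap and that the colouring stays proper---are then routine consequences of the layout above.
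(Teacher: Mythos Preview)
Your construction is correct and follows essentially the same approach as the paper's proof: identical intervals for all $v_k$, the path laid out to the right with $p_0$ nipping the right end of that common interval, the $m-n/3$ unused set components each housed in a distinct odd-indexed $p_i$ with the two $c$--$b$ branches poking out left and right, and the $n/3$ chosen components packed inside the $v$-interval with each $f_x$ pair routed out opposite ends so the $e_x$ pendants clear $v_x$. The only detail to keep an eye on (implicit in both your write-up and the paper's) is that the central vertices $u_i$ of the $M'$-components share colour $a$ with $p_0$, so you must place them in the portion of $R$ strictly left of $p_0$'s left endpoint; this is easily arranged since $p_0$ only grazes the rightmost sliver of $R$.
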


\begin{proof}
The $v$-vertices are assigned to identical intervals. Their $d$-coloured neighbours can be assigned arbitrarily small intervals that are placed in the left half of the $v$-vertex interval. The $p$-vertices can then be placed from left to right, so that $p_0$ overlaps a small portion of the right end of the $v$-vertex intervals, and each $p$ vertex interval overlaps the interval of the preceding $p$-vertex slightly. Finally the right barrier vertex (with colour $d$) should be placed in the right half of the interval corresponding to $p_2{2(m-n/3)}$.

Next, the connector vertex (with colour $d$) can be assigned a long interval between the left and right barrier vertices, that overlaps the $v$-vertex intervals and all of the $p$-vertex intervals. This placement of the connector vertex allows us to place the intervals of set components anywhere between the left and right barriers.

Let $M'\subseteq M$ be a solution to the {\sc X3C} instance. Since $|M'|=n/3$, and there are $m-n/3$ $p$-vertices with colour $b$, we can assign each element of $M$ that is not in the solution a unique $p$-vertex with colour $a$. We assign the central vertex (which has colour $a$) of the set component of each such element an interval inside the interval of its $p$-vertex, such that it does not intersect neighbouring $p$-vertices (which have color $a$). The vertices with $f$- or $e$-colours of the set component can be assigned similar intervals (not intersecting neighbouring $p$-vertices). One of the vertices with colour $c$ is assigned an interval that extends past the left endpoint of the $p$-vertex interval (and thus intersects the preceding $p$-vertex interval), which allows us to assign its neighbour with colour $b$ an interval that is contained in the preceding $p$-vertex interval (and does not intersect any other $p$-vertex interval). The other vertices with colours $b$ and $c$ can be placed similarly on the right.

Finally, for the set components corresponding to elements of $M'$, we can assign the vertices with colours $a,b,c$ arbitrarily small intervals in the right half of the $v$-vertex intervals; the $f$-coloured vertices can be placed so that their intervals stick out beyond the right and left endpoints of the $v$-vertex intervals, so that the $e$-coloured vertices can be placed not overlapping the $v$-vertex intervals. The fact that $M'$ is a solution guarantees this can be done without any $e,f$-colours overlapping each other, since each such colour occurs exactly twice (one such pair of vertices can be placed on the left, the other on the right). \qed
\end{proof}

This completes the reduction. Since the number of vertices of $G$ is linear in $|M|$ and $|X|$, we see that {\sc Intervalizing Coloured Graphs} does not admit a $2^{o(n)}$-time algorithm, unless the Exponential Time Hypothesis fails. \qed
\end{proof}
\end{maybeappendix}

\section{Conclusions}

In this paper, we have shown for several problems that, under the Exponential Time Hypothesis, $2^{\Theta(n/\log n)}$ is the best achievable running time - even when the instances are very restricted (for example in terms of pathwidth or planarity). For each of these problems, algorithms that match this lower bound are known and thus $2^{\Theta(n/\log n)}$ is (likely) the asymptotically optimal running time.

For problems where planarity or bounded treewidth of the instances (or, through bidimensionality, of the solutions) can be exploited, the optimal running time is often $2^{\Theta(\sqrt{n})}$ (or features the square root in some other way). On the other hand, each of problems studied in this paper exhibits some kind of ``packing'' or ``embedding'' behaviour.  For such problems, $2^{\Theta(n/\log n)}$ is often the optimal running time. We have introduced two artificial problems, {\sc String Crafting} and {\sc Orthogonal Vector Crafting}, that form a useful framework for proving such lower bounds.

It would be interesting to study which other problems exhibit such behaviour, or to find yet other types of running times that are ``natural'' under the Exponential Time Hypothesis. The loss of the $\log{n}$-factor in the exponent is due to the fact that $\log n$ bits or vertices are needed to ``encode'' $n$ distinct elements; it would be interesting to see if there are any problems or graph classes where a more compact encoding is possible (for instance only $\log^{1-\epsilon} n$ vertices required, leading to a tighter lower bound) or where an encoding is less compact (for instance $\log^2 n$ vertices required, leading to a weaker lower bound) and whether this can be exploited algorithmically.

\subsection*{Acknowledgement.} We thank Jesper Nederlof for helpful comments and discussions.

\bibliographystyle{splncs}

\bibliography{references}
\ifappendix
\clearpage

\appendix

\section*{Appendix}

\section{An Algorithm for String Crafting}\label{app:sc-algo}
\putmaybeappendix{app:sc-algo}

\section{Additional proofs for Subgraph embedding problems}\label{app:subgraphs-lab}
\putmaybeappendix{app:subgraphs}

\section{Minimum Size Tree and Path Decompositions}\label{app:mspd-lab}
\paragraph*{Continuation of the proof of Theorem \ref{thm:mspd}.}
\putmaybeappendix{app:mspd}

\section{Intervalizing $5$-Coloured Trees}\label{app:fewcol}
\paragraph*{Continuation of the proof of Theorem \ref{thm:intervalizing}.}
\putmaybeappendix{app:fewcol}

\section{Intervalizing Coloured Graphs with Many Colours}\label{app:manycol}
\putmaybeappendix{app:manycol}
\fi

\end{document}